\definecolor{Gray}{gray}{0.9}
\newcommand{\EDMgram}{\ensuremath{\mathcal{K}}}
\renewcommand{\vzero}{\ensuremath{\vec{\mathit{0}}}}
\newcommand{\trace}{\ensuremath{\mathop{\mathrm{Tr}}}}
\newcommand{\mLambda}{\bm{\mathit{\Lambda}}}
\renewcommand{\diag}{\mathop{\mathrm{diag}}}
\newcommand{\affdim}{\ensuremath{\mathrm{affdim}}}
\newcommand{\mTheta}{\mat{\Theta}}
\newcommand{\embdim}{{d}}
\begin{document}

\title{Kinetic Euclidean Distance Matrices}
\author{Puoya Tabaghi,
        Ivan~Dokmani\'c,
        Martin Vetterli
        \thanks{This work was presented at the ``Distance Geometry:
Inverse problems between geometry and optimization'' workshop in November 2017 (\url{https://www.lix.polytechnique.fr/\~liberti/dg17/}).}
} % <-this % stops a space

% make the title area
\maketitle

% As a general rule, do not put math, special symbols or citations
% in the abstract or keywords.
\begin{abstract}
Euclidean distance matrices (EDMs) are a major tool for localization from distances, with applications ranging from protein structure determination to global positioning and manifold learning. They are, however, static objects which serve to localize points from a snapshot of distances. If the objects move, one expects to do better by modeling the motion. In this paper, we introduce Kinetic Euclidean Distance Matrices (KEDMs)---a new kind of time-dependent distance matrices that incorporate motion. The entries of KEDMs become functions of time, the squared time-varying distances. We study two smooth trajectory models---polynomial and bandlimited trajectories---and show that these trajectories can be reconstructed from incomplete, noisy distance observations, scattered over multiple time instants. Our main contribution is a semidefinite relaxation (SDR), inspired by SDRs for static EDMs. Similarly to the static case, the SDR is followed by a spectral factorization step; however, because spectral factorization of polynomial matrices is more challenging than for constant matrices, we propose a new factorization method that uses anchor measurements. Extensive numerical experiments show that KEDMs and the new semidefinite relaxation accurately reconstruct trajectories from noisy, incomplete distance data and that, in fact, motion improves rather than degrades localization if properly modeled. This makes KEDMs a promising tool for problems in geometry of dynamic points sets.
\end{abstract}

% Note that keywords are not normally used for peerreview papers.
\begin{IEEEkeywords}
Euclidean distance matrix, positive semidefinite programming, polynomial matrix factorization, trajectory, localization, spectral factorization.
\end{IEEEkeywords}

\IEEEpeerreviewmaketitle

\section{Introduction}

% \tableofcontents

\IEEEPARstart{T}{h}e famous distance geometry problem (DGP) \cite{Liberti:2012ut} asks to reconstruct the geometry of a point set from a subset of interpoint distances. It models a wide gamut of practical problems, from sensor network localization \cite{Biswas:2006cm,Krislock:2010ga,Bai:2015tg} and microphone positioning \cite{Dokmanic:2015cz, Dokmanic:2012ii, Dokmanic:2013dz, Parhizkar:2014kn} to clock synchronization \cite{Singer:2008kh, Liberti:2016tr}, to molecular geometry reconstruction from NMR data \cite{Hendrickson:1995bg, Hendrickson:1990tg}. Among the most successful vehicles for the design of DGP algorithms are the Euclidean distance matrices (EDM) \cite{Dokmanic:2015eg}.

EDMs model static objects. When things move, they characterize a snapshot of the interpoint distances and the point set geometry. It seems intuitive that with a good model for the trajectories one should be able to leverage the motion and improve trajectory estimation.

In this work we make a first step towards distance matrices for moving points, which we call Kinetic EDMs (KEDMs) inspired by the notion of kinetic data structures \cite{Guibas:1998ww} for moving points. KEDMs are a generalization of EDMs whose entries now become functions of time. We show how by using KEDMs we can neatly address the kinetic distance geometry problem (KDGP), a natural generalization of the classical, static distance geometry problem (DGP) defined in \Cref{sec:kdgp}. Unlike with the static DGP, in order to make the kinetic version well posed, we must constrain the point trajectories to belong to a class of functions, for example polynomial trajectories of a bounded degree. Informally, we ask the following question: suppose a set of points move according to a known trajectory model. At given time instants we measure a subset of pairwise distances; the subset can change between measurements and it may be too small to allow localization at any time alone. Can we systematically localize the points and reconstruct trajectories by exploiting the trajectory model?

Localization of dynamic point sets from  distance measurements finds applications whenever objects move. Robot swarms, for example, often must localize autonomously \cite{cornejo2015distributed}, especially in remote situations such as extraterrestrial exploration \cite{matthaei2013swarm} or deep-water missions \cite{jaffe2017swarm}. Related applications exist in environmental monitoring, for example for dynamic sensor networks composed of river-borne sensing nodes \cite{Wu:2015br}. An important application of localization of moving objects is in global positioning with satellites where both the satellites and the users move. Applications are emerging where sensing is opportunistic and the positions of reference objects are not known \cite{Morales:2016xx}; in \Cref{sec:gps}, we present a simulated example of global positioning with unknown satellite trajectories. This problem is further related to simultaneous localization and mapping (SLAM) \cite{DurrantWhyte:2006ev,SLAMK2016}. Kinetic distance geometry problems are common in computer vision. Examples are action recognition from dynamic interjoint distance skeleton data \cite{hernandez20173d} and more generally data structures for describing kinetic point sets \cite{Guibas:1998ww}. Applications in multi-robot coordination, crowd simulations, and motion retargeting are explored in \cite{mucherino2017approach, mucherino2018application}, where the authors introduce the \emph{dynamical distance geometry problem} (dynDGP).\footnote{Though related, the dynDGP is rather different from our KDGP.} Even in applications to proteins and molecules, the atoms move (for example, proteins fold) in specific ways \cite{wiki:xxx}.

The study of distance geometry and EDMs goes back to the work of Menger \cite{Menger:1928bc}, Schoenberg \cite{Schoenberg:1935dk},
Blumenthal \cite{Blumenthal:1953ie}, and Young and Householder \cite{Browne1987}. Gower  derived numerous results on EDMs \cite{Gower:1982wm, Gower:1985un} including a complete rank characterization \cite{Gower:1985un}. An extensive treatise on EDMs with many original results and an elegant characterization of the EDM cone was written by Dattorro \cite{Dattorro:2011wa}; \cite{Dokmanic:2015eg} is a tutorial-style introduction to EDMs.

A large class of approaches to point set localization from distance measurements rely on semidefinite programming \cite{Krislock:2012xx,So:2007cz}. Namely, the localization problem is written in terms of the Gram matrix of the point set which leads to a rank-constrained semidefinite program. The rank constraint is often relaxed to arrive at a semidefinite relaxation which is a convex optimization problem and can be solved using standard tools. 

We take inspiration from these approaches and show how trajectory localization can also be formulated as a semidefinite program, thus answering the above question in the affirmative. Concretely, we show that the parameters of chosen trajectory models can be recovered by a semidefinite program and a tailor-made alignment procedure akin to Procrustes analysis. The latter can be interpreted as spectral factorization of semidefinite polynomial matrices with side information, and our developments rely on the related spectral factorization results \cite{tabaghi2019real}.

As we will show, a major difference from the static case is that the (time-varying) distance information does not specify the point set up to a rigid transformation. The larger class of ambiguities can nevertheless be resolved through the use of anchor points.
   
We show through extensive computational experiments that our semidefinite relaxation indeed works as expected and that with an appropriate trajectory model we can reduce the number of measurements \emph{per time instant} well below that minimally required for localization in the static case.

\subsection{Paper Outline} % (fold)
\label{sub:paper_outline}

In \Cref{sec:kdgp} we extend the definition of the distance geometry problem (DGP) to its kinetic version and review the essential facts about Euclidean distance matrices and associated semidefinite programs. Next, in Section \ref{sec:trajectories_and_basis_gramians}, we introduce the two trajectory models we will use throughout the paper---polynomial and bandlimited---and show how to write the corresponding polynomial Gram matrices in a form suitable for semidefinite programming. We use this form (given in terms of the so-called basis Gramians) to formulate a semidefinite relaxation in \Cref{sec:kinetic_gramian_estimation}.
The solution to the SDP, however, only gives us \textit{distance} trajectories. 
To convert them to point trajectories, we need known and new results on spectral factorization of polynomial matrices developed in Section \ref{sec:sf}. Finally, we provide simulation results and list several promising directions for future work.

\subsection{Reproducible Research}

We adhere to the philosophy of reproducible research: documented code and data to reproduce all experiments is available online.\footnote{\url{https://github.com/swing-research/kedm/}}

\section{Static and Kinetic Distance Geometry Problems} 
\label{sec:kdgp}

\begin{figure*}[t]
\centering
\includegraphics[width=.95\linewidth]{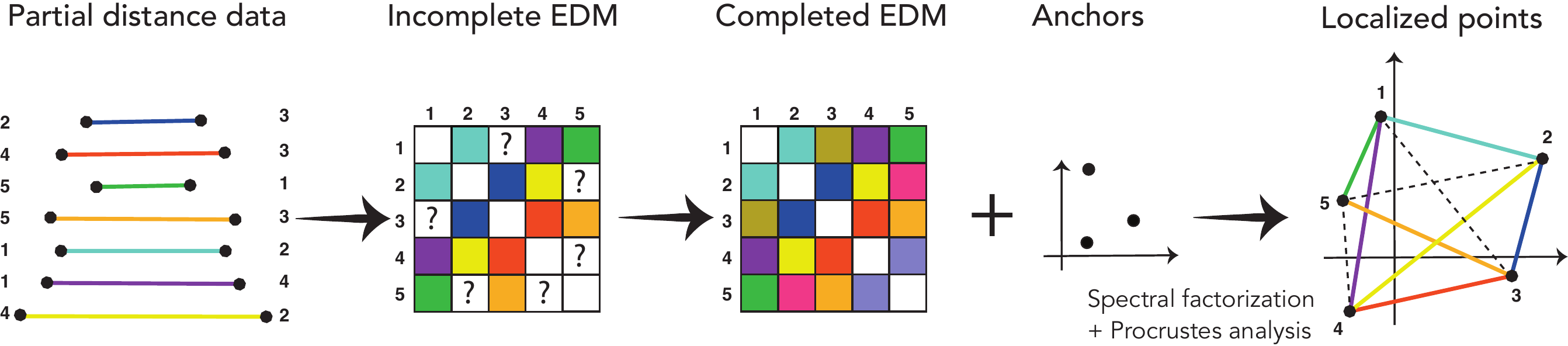}
\caption{The objective of DGP is to find an embedding for a given partial pair-wise distance data. This can be done in two steps: a) Completing EDM associated with the measurements, i.e. estimating the missing measurements and b) Estimate an embedding and using anchor points to resolve the rigid transformation ambiguity, discussed in \Cref{sec:solving_dgp_edms}.}
\label{fig:dgp-edm-illustration}
\end{figure*}

We begin by introducing the classical distance geometry problem (DGP) and then formulate its generalization to moving points. We also discuss an EDM-based approach to the DGP with noisy and incomplete distances. 

The DGP can be informally stated as follows: find the $\embdim$-dimensional locations $\set{\vx_n \in \R^{\embdim}}_{n=1}^N$ of a set of points, given a subset of possibly noisy pairwise distances $\{d_{mn}: 1 \leq m < n \leq N\}$. We will work only with Euclidean distances so that $d_{mn} = \norm{\vx_m - \vx_n}$.

An elegant formalization can be made in graph-theoretic terms. Consider a graph $G = (V, E)$ whose vertex set $V$ corresponds to the points $\set{\vx_n}_{n=1}^{N}$. The edge set $E$ tells us which distances are measured and which are not. Given two vertices $u, v \in V$ and the corresponding undirected edge $e = \set{u, v}$, we have $e \in E$ if and only if the distance between $u$ and $v$ is known. Let $f : E \to \R^+$ be the weight function that assigns those known, measured distances to edges. Then we can pose the following problem \cite{Liberti:2012ut}:
\begin{problem}[Distance Geometry Problem] \label{prob:dgp}
Given an integer $\embdim > 0$ (the ambient dimension) and an undirected graph $G = (V, E)$ whose edges are weighted by a non-negative function $f : E \rightarrow \R^{+}$ (distance), determine whether there is a function $x : V \rightarrow \R^{\embdim}$ such that
\begin{equation*}
\text{for all}~\{u, v\} \in E~\text{we have}~\|x(u) - x(v)\| = f(\{u, v\}).
\end{equation*}
\end{problem}
The function $x$ which assigns coordinates to vertices is called an embedding or a realization of the graph in $\R^\embdim$. Of course, in practice the measurements are corrupted by measurement errors, and the goal is to minimize some notion of discrepancy between the measured distances and the distances induced by our estimate; for example:
\begin{equation} 
\underset{x : V \to \R^\embdim}{\text{minimize}}  \sum_{\set{u, v} \in E} \left( \norm{x(u) - x(v)} - f(\set{u, v}) \right)^2.
\label{eq:dgp_in_x}
\end{equation}
Section \ref{sec:solving_dgp_edms} explains how to use EDMs to proceed in this case. Figure \ref{fig:dgp-edm-illustration} illustrates the DGP with an intermediate step of constructing an EDM. The EDM can be interpreted as a weighted adjacency matrix with weights being squared distances.

In this paper, we address distance geometry problems when the points move and the set of measured distance changes over time. Instead of localizing the points only at the measurement times, our goal will be to estimate  entire trajectories for all times. To make this problem well posed we must introduce a class of admissible continuous trajectories $\mathcal{X}$. Then, we can formulate the following kinetic version of Problem \ref{prob:dgp}:

\begin{problem}[Kinetic Distance Geometry Problem] \label{prob:kdgp}
Given an embedding dimension $\embdim > 0$, a set of $T$ sampling times $\mathcal{T} = \set{t_1, \ldots, t_T} \subset \R$, and a sequence of undirected graphs $G_i = (V, E_i)$ whose edges are weighted by non-negative functions $f_i : E_i \rightarrow \R^{+}$, for $i \in \set{1, \ldots, T}$, determine whether there is a function $x : V \times \R \rightarrow \R^{d} \in \mathcal{X}$ such that for all $t_i \in \mathcal{T}$ we have:
\begin{equation*}
\|x(u, t_i) - x(v, t_i)\| = f_i(\{u, v\})~\text{for all}~\{u, v\} \in E_i,
\end{equation*}
where $\mathcal{X}$ is the set of admissible trajectories.
\end{problem}

Figure \ref{fig:kdgp-kedm-illustration} illustrates the KDGP for four trajectories. One way to interpret KDGP is as a sequence of static DGPs with additional information about sampling times and trajectory model. Indeed, the KDGP can be seen a a nonlinear spatio--temporal sampling problem, with the nonlinear samples (distances) spread in space in time. A natural question is whether we can compensate for a reduction in the number of spatial samples by oversampling in time. We answer this question in affirmative in Section \ref{sec:results}.

The first step is to estimate the continuous KEDM from samples distributed in space and time; this is discussed in Section \ref{sec:kinetic_gramian_estimation}. The second step is to use information about the absolute positions of a set of anchor points in order to assign absolute locations to trajectories; this is discussed in Section \ref{sec:sf}. This step is more challenging than for the usual EDMs. 

\subsection{Solving the Distance Geometry Problem by EDMs} 
\label{sec:solving_dgp_edms}

It is useful to recall the EDM-based approach to the DGP. %on which we will build our KEDM algorithms. 
Ascribe the coordinates of $N$ points in a $D$-dimensional space to the columns of matrix $\mX \in \R^{\embdim \times N}$,  $\mX = [\vx_1, \ \vx_2, \ \cdots,\ \vx_N]$. The squared distance between $\vx_i$ and $\vx_j$ is 
\begin{equation*}
d_{ij}^2 = \norm{\vx_i - \vx_j}^2 = \norm{\vx_i}^2 - 2 \vx_i^\T \vx_j + \norm{\vx_j}^2,
\end{equation*}
from which we can read out the equation for the EDM $\mD = (d_{ij}^2)$ as
\begin{equation}
    \label{eq:edm_gram_assemble}
    \mD = \EDMgram(\mG) \bydef \diag(\mG) \vone^\T  - 2 \mG + \vone \diag(\mG)^\T,
\end{equation}
where $\vone$ denotes the column vector of all ones, $\mG$ is the Gram matrix $\mG = \mX^\T \mX$, and $\diag(\mG)$ is a column vector of the diagonal entries of $\mG$. We see that the EDM of a point set is a linear function of its Gram matrix. Reformulating the problem in terms of the Gram matrix is beneficial because it will lead to a semidefinite program. If we can find the Gram matrix,  the point set can be obtained by an eigenvalue decomposition.

To see how, let $\mG = \mU \mLambda \mU^\T$, where $\mLambda = \diag(\lambda_1, \ldots, \lambda_N)$ with all eigenvalues $\lambda_i$ non-negative, and $\mU$ orthonormal, as $\mG$ is a symmetric positive semidefinite matrix. Assume that the eigenvalues are sorted in decreasing order %the order of decreasing magnitude, 
$\lambda_1 \geq \lambda_2 \geq \cdots \geq \lambda_N$. Then we can estimate the point set as $\wh{\mX} \bydef \big[ \diag \big( \sqrt{\lambda_1},\ \ldots,\sqrt{\lambda_\embdim}\, \big), \ \vzero_{\embdim \times (N-\embdim)} \big] \mU^\T$. Since the EDM only specifies the points up to a rigid transform, $\wh{\mX}$ will be a rotated, reflected and translated version of $\mX$.	

One way to estimate $\mD$ from noisy, incomplete distance data is by semidefinite programming. This hinges on the one-to-one equivalence between EDMs with embedding dimension $\embdim$ and centered Gram matrices of rank $\embdim$. Define the geometric centering matrix $\mJ$ as
\begin{equation*}
    \mJ_N \bydef \mI - \frac{1}{N} \vone \vone^\T.
\end{equation*}
Then $\EDMgram(\mG)$ is an invertible map on the set of Gram matrices which correspond to centered point sets (implying $\mG \vone = \vzero$) with the inverse given by
\begin{equation*}
    -\frac{1}{2} \mJ_N \EDMgram(\mG) \mJ_N  = \mG.
\end{equation*}
In particular, we have the following equivalence that holds for matrices $\mD$ with a zero diagonal:
\begin{equation}
\label{eq:lower_dim_characterization}
\begin{rcases*}
    \mD = \mathcal{D}(\mX) \\
    \affdim(\mX)  \leq \embdim
\end{rcases*} \quad
\Longleftrightarrow \quad
\begin{cases}
    -\frac{1}{2} \mJ_N \mD \mJ_N \succeq 0 \\
    \rank(\mJ_N \mD \mJ_N) \leq \embdim.
\end{cases}
\end{equation}
where $\mathcal{D}(\mX) = \mathcal{K}(\mX^{\T}\mX)$ and $\affdim$ denotes the dimension of the smallest affine space that can hold $\mX$. In other words, instead of directly searching for the points $\mX$ given distance data, we can search for the suitable Gram matrix. 

\begin{figure*}[h]
\centering
\includegraphics[width=1 \linewidth]{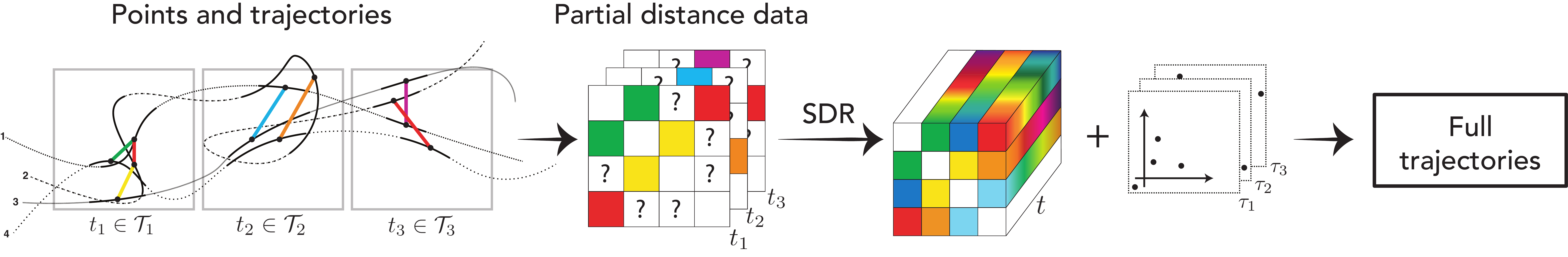}
\caption{KDGP: Estimate an embedded trajectory for a given sequence of partial pair-wise distances at different times, $t_1, \cdots, t_T$. We estimate the corresponding KEDM with a semidefinite relaxation \cref{alg:sdp}, and then use anchors to estimate the trajectories.}
\label{fig:kdgp-kedm-illustration}
\end{figure*}

Let $\wt{\mD}$ be the noisy, incomplete EDM from which we want to estimate the point locations, with unknown entries replaced by zeroes. Define  the mask matrix $\mW = (w_{ij})$ as
\begin{equation*}
    w_{ij} \bydef
    \begin{cases}
       1, \ (i, j) \in E \\
       0, \ \text{otherwise}
    \end{cases}.
\end{equation*}
This mask matrix will let us compute the loss only on those entries that were actually measured. Note that $\mW$ is precisely the adjacency matrix of the undirected graph from Problem \ref{prob:dgp}.

Then the above discussion is summarized in the following rank-constrained semidefinite program:
\begin{align} 
& \underset{\mG}{\text{minimize}}       & &  \| \wt{\mD} -\mW \circ \mathcal{K}\left(\mG \right) \|_{F}^{2} & \label{eq:stat_dgp} \\
& \text{subject to}     && \mG \succeq \mat{0} & \nonumber \\
&  & & \mG\vone = \vzero \nonumber \\
&                   & &   \rank \left( \mG \right) \leq \embdim. \nonumber
\end{align}

Since the Gram matrix (Gramian) is linearly related to the EDM, the objective function is convex. However, the rank constraint, $\rank \left( \mG \right) \leq \embdim$, makes the feasible set in \eqref{eq:stat_dgp} non-convex. Note that \eqref{eq:dgp_in_x} is also a non-convex program. However, while formalization in the $\mX$ domain gives no obvious way to convexify the problem, in the $\mG$ domain we can achieve convexity by discarding the rank constraint. It has been repeately shown in the literature that this \emph{semidefinite relaxation} achieves very good results \cite{Dokmanic:2015eg,Krislock:2012xx}. An intuitive explanation is that while the rank condition ensures the correct embedding dimension, the constraint that $\mG$ be positive semidefinite (in other words, that it be a Gramian) enforces a number of geometric constraints. For instance, it ensures that the entries of the EDM verify triangle inequalities, but also many more subtle properties beyond this. As long as the number of measured distances is large enough,\footnote{An empirical study of what ``large enough'' means is available in \cite{Dokmanic:2015eg}} we can expect these constraints to ensure the right embedding dimension.

The constraint $\mG \vone = \vzero$ serves to set the centroid of the recovered point set at the origin of the coordinate system as it implies $\mX \vone = \vzero$. This resolves the translational invariance of the problem. The remaining rotational (and reflection) invariance must be resolved once the points are estimated from the Gramian. The Gramian itself is of course invariant to the rotations of the point set since $\mG = \mX^\T \mX = (\mU \mX)^\T \mU \mX$ for any orthonormal matrix $\mU \in \R^{d \times d}$.

\subsection{Orthogonal Procrustes Problem}
\label{sub:procrustes}

As mentioned before, the EDM only specifies the point set up to a rigid transformation (rotation, translation, and reflection). If the task requires determining absolute positions of points, the standard method is to designate a subset of points as \emph{anchors} whose positions are known, and use anchors to align the reconstructed point set.

Let $\mX_a \in \R^{\embdim \times N_a}$ be the submatrix (a selection of columns) of $\mX$ that should be aligned with the anchors listed as columns of $\mY \in \R^{\embdim \times N_a}$. We note that the number of anchors---columns in $\mX_a$---is typically small compared with the total number of points---columns in $\mX$.

We first center the columns of $\mY$ and $\mX_a$ by subtracting the corresponding column means $\vy_c = \mY \mJ_{N_a}$ and $\vx_{a,c} = \mX_a \mJ_{N_a}$, obtaining matrices $\overline{\mY}$ and $\overline{\mX}_a$.  Next, we perform the orthogonal Procrustes analysis---we search for the rotation and reflection that best maps $\overline{\mX}_a$ onto $\overline{\mY}$:
\begin{equation}
    \label{eq:orthogonal_procrustes}
    \mR = \argmin_{\mQ:\mQ \mQ^\T = \mI} \norm{\mQ \overline{\mX}_a - \overline{\mY}}_F^2.
\end{equation}
The solution to \eqref{eq:orthogonal_procrustes} is given by the singular value decomposition (SVD) \cite{Schonemann:1964tj} as follows.
Let $\mU \mSigma \mV^\T$ be the SVD of $\overline{\mX}_a \overline{\mY}^\T$; then $\mR = \mV \mU^\T$. 
The best alignement is applied to the reconstructed point set as
\begin{equation*}
\mX_{\text{aligned}} = \mR (\mX - \vx_{a,c} \vone^\T) + \vy_c \vone^\T.
\end{equation*}

\section{Trajectory Models and Basis Gramians}
\label{sec:trajectories_and_basis_gramians}

In order to extend the EDM-based tools to the KDGP, we must define the class of trajectories $\mathcal{X}$. We introduce two trajectory models---polynomial and bandlimited---and show how they can be parameterized in terms of the so-called basis Gramians. 

The chosen trajectory models are standard; they model many interesting trajectories. The polynomial model is common in simultaneous localization and mapping as well as tracking, where it appears as constant velocity or constant acceleration model \cite{Wang:2003gd,wang:2003ladar}. The bandlimited model describes periodic trajectories of varying degrees of smoothness which are locally well-approximated by polynomials.

We use similar notation as in the static case. Let $\mX(t) = [\vx_1(t), \ldots, \vx_N(t)]$ be the trajectory matrix of $N$ points in $\mathbb{R}^\embdim$ where $\vx_n(t)$ is the position of $n$-th point at time $t$. We define the KEDM in a natural way:
\begin{definition}[KEDM]
Given a set of trajectories $\mX(t) \in \R^{\embdim \times N}$, the corresponding KEDM is the time-dependent matrix $\mD(t) \in \R^{N \times N}[t]$ of time-varying squared distances between the points:
\begin{align} 
\mD(t) \bydef \mathcal{D}(\mX(t)).
\label{eq:GtoEDM}
\end{align}
\end{definition}

\subsection{Polynomial Trajectories} 
For a set of $N$ points in $\R^\embdim$, we define the set of polynomial trajectories of degree $P$ as
\begin{equation} \label{eq:x_poly_tra}
    \mathcal{X}_\text{poly} = \set{\sum_{p=0}^{P}{t^p \mA_p} \ \bigg| \ \mA_{p} \in \mathbb{R}^{\embdim \times N}, \ p \in \set{0, \ldots, P}}.
\end{equation}
For $\mX(t) \in \mathcal{X}_\text{poly}$, the Gramian at time $t$ can be written as
\begin{equation}\label{eq:g_poly_tra}
    \mG(t) =  \sum_{k=0}^{K}{\mB_k t^k},
\end{equation}
where $\mB_k = \sum_{i=0}^{k}{\mA_i^\T\mA_{k-i}}$ and $K = 2P$. Similar to the static case, our goal is to cast the trajectory retrieval problem as a semidefinite program; we do so via the time-dependent Gramian in  \Cref{sec:kinetic_gramian_estimation}. 

The key step is to parameterize the problem entirely in terms of (constant) positive semidefinite matrices, instead of the parameterization in terms of $\mA_p$ or $\mB_k$. To do so, we fix $K + 1$ time instants $\tau_0, \ldots, \tau_{K}$ and define $\mG_k \bydef \mG(\tau_k)$. The matrices $\mG_k$ should be interpreted as \emph{elementary}, or \emph{basis} Gramians in the sense that the Gramian $\mG(t)$ can be written as a linear combination of $\mG_0,\ \ldots,\mG_{K}$ as elaborated in the following proposition.

\begin{proposition}\label{prop:polyG}
Consider the polynomial trajectory in \eqref{eq:x_poly_tra}. Let $\mG_k, \ k \in \set{0, 1, \ldots, K}, \ K = 2P$ be given as above with $\tau_k$ all distinct. Then we have 
\begin{equation*} 
  \mG(t) =  \sum_{k=0}^K w_k (t) \mG_k, 
\end{equation*}
with the weights $\vw(t) = [w_0(t), \cdots, w_K(t)]^\T$ given as
\begin{equation*}
\vw(t) =
\begin{pmatrix}
1 & 1 & \cdots & 1\\ 
\vdots & \vdots &  \ddots & \vdots \\ 
\tau_{0}^{K} & \tau_{1}^{K} & \cdots & \tau_{K}^{K} 
\end{pmatrix}^{-1}            
\begin{pmatrix} 
  1 \\ t \\ \vdots \\ t^{K} 
\end{pmatrix}.
\end{equation*}
\end{proposition}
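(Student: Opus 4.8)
The plan is to exploit the fact that a polynomial trajectory induces a Gramian $\mG(t)$ that is itself a \emph{matrix polynomial} of degree $K=2P$, as already recorded in \eqref{eq:g_poly_tra}. Write $\mG(t) = \sum_{k=0}^{K} \mB_k t^k$. Evaluating at the $K+1$ distinct nodes $\tau_0,\ldots,\tau_K$ gives $\mG_j = \mG(\tau_j) = \sum_{k=0}^{K}\mB_k \tau_j^k$ for each $j$. The key observation is that this is, entrywise, nothing but a Vandermonde linear system relating the ``sample'' matrices $(\mG_0,\ldots,\mG_K)$ to the ``coefficient'' matrices $(\mB_0,\ldots,\mB_K)$. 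Since the $\tau_j$ are distinct, the $(K+1)\times(K+1)$ Vandermonde matrix
\[
V \;=\;
\begin{pmatrix}
1 & 1 & \cdots & 1\\
\tau_0 & \tau_1 & \cdots & \tau_K\\
\vdots & \vdots & \ddots & \vdots\\
\tau_0^K & \tau_1^K & \cdots & \tau_K^K
\end{pmatrix}
\]
is invertible, so the map from coefficients to samples is a bijection and the $\mB_k$ can be recovered as linear combinations of the $\mG_j$.

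Next I would substitute this back into the polynomial. We have $\mG(t) = \sum_{k=0}^{K} \mB_k t^k = \big(\vt(t)\big)^{\!\T} \vec{\mB}$ in a suitable shorthand, where $\vt(t) = (1, t, \ldots, t^K)^\T$ and $\vec{\mB}$ stacks the coefficient matrices. From the sampling relation, stacking over $j$ gives $\vec{\mG} = (V^\T \otimes \mI)\,\vec{\mB}$ in an appropriate ordering, equivalently $\vec{\mB} = (V^{-\T}\otimes \mI)\,\vec{\mG}$; chaining the two identities yields $\mG(t) = \sum_{k} w_k(t)\,\mG_k$ with the weight vector $\vw(t)$ determined by $\vw(t)^\T V^\T = \vt(t)^\T$, i.e. $V^\T \vw(t) = \vt(t)$, which is exactly $\vw(t) = V^{-\T}\vt(t)$ — and since $V^{-\T} = (V^\T)^{-1}$ is the inverse of the transposed Vandermonde matrix displayed in the statement, this is precisely the claimed formula. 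A cleaner route that avoids Kronecker bookkeeping is to note that the scalar identity $\sum_{k=0}^{K} w_k(t)\,\tau_j^k = t^j$ for all $j$ (which is what $V^\T \vw(t) = \vt(t)$ says) means $\sum_k w_k(t)\tau_j^k$ interpolates the monomials, so $\sum_k w_k(t)\,p(\tau_k) = p(t)$ for every polynomial $p$ of degree $\le K$; applying this to each entry of $\mG(\cdot)$, which is such a polynomial, gives $\mG(t) = \sum_k w_k(t)\mG_k$ directly.

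Concretely I would carry out the steps in this order: (i) recall from \eqref{eq:g_poly_tra} that $\mG(t)$ is an entrywise polynomial of degree $K$; (ii) observe that $\{w_k(t)\}$ defined by the stated inverse-Vandermonde formula satisfies the interpolation identities $\sum_{k=0}^{K} w_k(t)\,q(\tau_k) = q(t)$ for all scalar polynomials $q$ of degree at most $K$, because the formula is just the dual/Lagrange characterization of such weights and the $\tau_k$ are distinct so the Vandermonde is invertible; (iii) apply this identity to each of the $N^2$ scalar polynomials $[\mG(t)]_{mn}$, concluding $[\mG(t)]_{mn} = \sum_k w_k(t)\,[\mG(\tau_k)]_{mn} = \sum_k w_k(t)\,[\mG_k]_{mn}$; (iv) reassemble into matrix form.

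I do not expect a genuine obstacle here — the statement is essentially the Lagrange interpolation formula in disguise, lifted entrywise to matrices. The only point requiring a word of care is the claim that $\mG(t)$ really has degree exactly $K=2P$ (or at most $K$), which must be imported from \eqref{eq:g_poly_tra} rather than re-derived, and the bookkeeping needed to see that the linear system in the statement is the \emph{transpose} of the naive Vandermonde system (so that $\vw(t)$ multiplies the \emph{columns} indexed by the nodes correctly). Both are routine; the bulk of the argument is the observation that distinctness of the $\tau_k$ is exactly what makes the interpolation well-posed.
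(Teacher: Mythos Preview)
Your proposal is correct and follows essentially the same route as the paper's proof: both reduce to entrywise Lagrange interpolation of the degree-$K$ matrix polynomial $\mG(t)$ via invertibility of the Vandermonde matrix at the distinct nodes $\tau_k$. One bookkeeping slip to fix when you write it up: from $\vw(t)^\T V^\T = \vt(t)^\T$ you get $V\vw(t)=\vt(t)$, not $V^\T\vw(t)=\vt(t)$, so $\vw(t)=V^{-1}\vt(t)$, and the matrix displayed in the statement is $V$ itself (not its transpose); correspondingly the scalar interpolation identity should read $\sum_{k} w_k(t)\,\tau_k^{\,j}=t^{\,j}$.
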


\begin{proof}
The Gramians can be written as linear combinations of a set of monomial terms (cf. \eqref{eq:g_poly_tra}), which gives 
\begin{equation}
\label{eq:basis_system}
\begin{aligned}
  \mG_0  \ & = \  \mB_0  + \tau_0 \mB_1 + \cdots +  \tau_0^{K} \mB_{K} \\
  & \vdots \\
  \mG_{K}  \ & = \  \mB_0  + \tau_{K}\mB_1 + \cdots +  \tau_{K}^{K} \mB_{K}.
\end{aligned}
\end{equation}
Each matrix equation in \eqref{eq:basis_system} consists of $N \times N$ scalar equations for entries of $\mG_k$. Focusing on a particular entry $(i, j)$ gives a usual linear system $\vg = \mM \vb$ with column vector $\vg = [g_0,\cdots,g_K]^\T$ where $g_k$ is $(i, j)$-th element of $\mG_k$, the matrix $\mM = \bydef [ \tau_{k}^{k'}]_{0 \leq k,k' \leq K}$, and $\vb = [b_0,\cdots,b_{K}]^\T $ where $b_k$ is $(i, j)$-th element of $\mB_k$. We also have from \eqref{eq:g_poly_tra} that $[\mG(t)]_{ij} = (1, t, t^2, \ldots, t^K ) \vb \bydef \vt ^\T \vb$. Since $\tau_k$ are distinct, the square Vandermonde matrix $\mM$ is invertible. We have $\vb = \mM^{-1} \vg$, which gives $[\mG(t)]_{ij}  =  \vt^\T \mM^{-1} \vg$. Denoting $\vw(t) = (\mM^\T)^{-1} \vt$ we have that $[\mG(t)]_{ij} = \vw(t)^\T \vg = \sum_{k=0}^K w_k(t) [\mG_k]_{ij}$ which proves the claim.
\end{proof}

This result is a matrix version of Lagrange interpolation. Since entries of $\mG(t)$ are polynomials of degree $2P$ in $t$, they are completely determined by their values at $2P+1$ points. However, in this Gram matrix version it gives us something rather useful: a way to write a positive semidefinite $\mG(t)$ as a linear combination of positive semidefinite $\mG_k$, which lends itself nicely to convex optimization. We note that the question of how to choose the sampling times $\tau_k$ is beyond the scope of this article, though we give empirical results in Section \ref{sec:results}.

\subsection{Bandlimited Trajectories}
The second model we consider are periodic bandlimited trajectories. For a set of $N$ points in $\R^{\embdim}$, the set of periodic bandlimited trajectory of degree $P$ can be written as
\begin{equation}\label{eq:x_sine_tra}
    \begin{aligned}
    &\mathcal{X}_\text{BL} = 
     \bigg\{ \mB_0 + \sum_{p=1}^{P}\{\mA_p \sin(p \omega t) + \mB_p \cos(p \omega t)\ \bigg| \\ 
    & \ \ \ \mA_p, \mB_0, \mB_p \in \R^{\embdim \times N}, p \in \set{1, \ldots, P}, \omega \in \R^+ \bigg\}.
    \end{aligned}
\end{equation}
\begin{figure}[t]
    \centering
    \includegraphics[width=.9\linewidth]{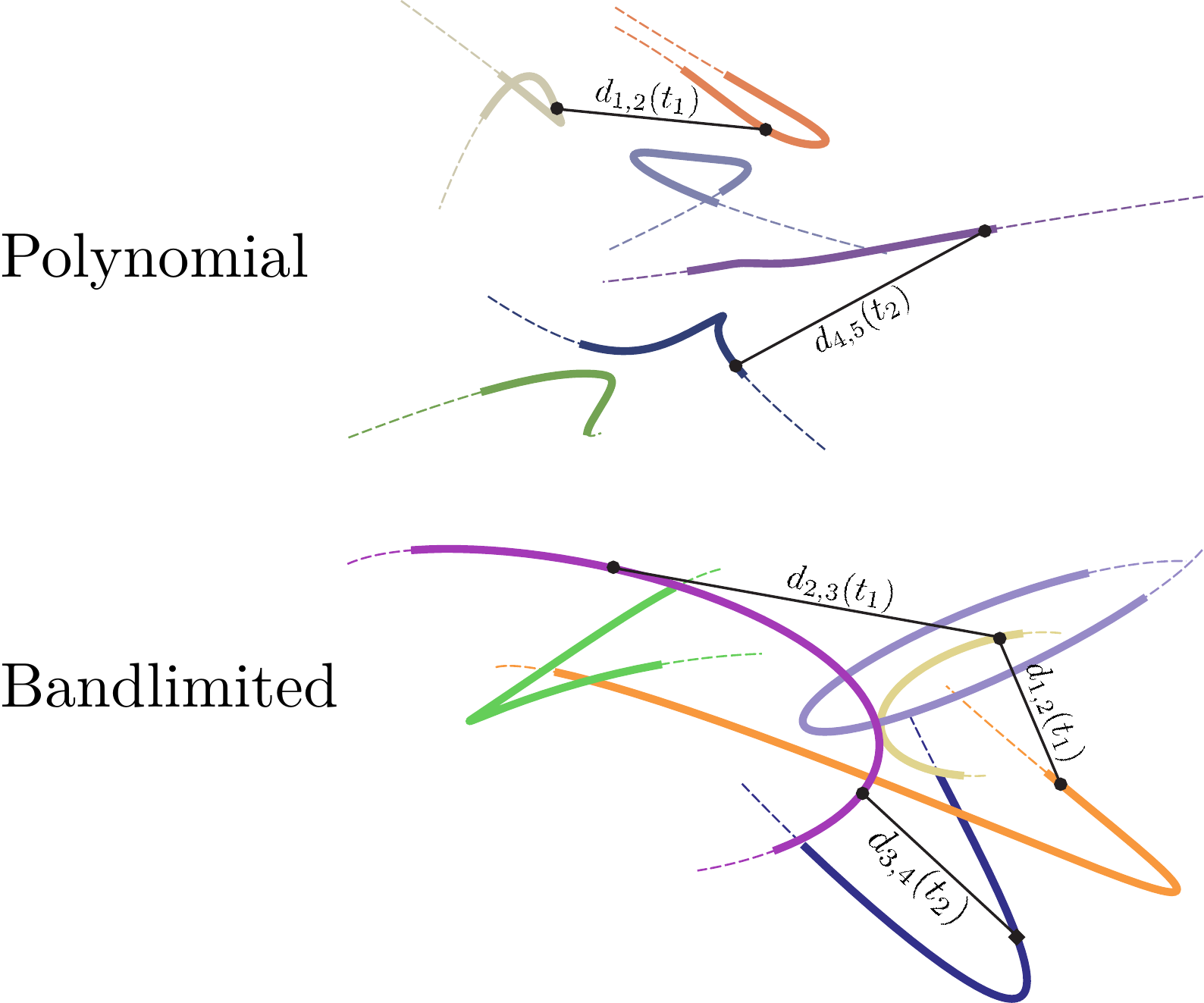}
    \caption{Example polynomial  ($P = 5$) and bandlimited ($P = 3, \omega = \pi / 4$) trajectories. Solid parts correspond to the sampling window $[T_1, T_2] = [-0.8,0.8]$.  The goal is to retrieve the trajectories of all points by measuring $d_{i,j}(t) = \| \vx_i(t)-\vx_j(t)\|_{2}$ at different times and for different $(i,j)$-pairs.}
\end{figure}
Similar to the polynomial case, we represent the Gramian $\mG(t)$ as a linear combination of some Gramian basis.
\begin{proposition}\label{prop:sineG}
Consider the bandlimited trajectory in \eqref{eq:x_sine_tra}. Let $\mG_k, \ k \in \set{0, 1, \ldots, K}, \ K = 4P$ be given as above with $\tau_k$ all distinct (modulo $\frac{2\pi}{\omega}$). We have 
\begin{equation*}
\mG(t) =  \sum_{k=0}^K w_k (t) \mG_k 
\end{equation*}
with the weights $\vw(t) = [w_0(t), \cdots, w_K(t)]^\T$ given as
\begin{align*}
\vw(t) =
\begin{pmatrix}
1 & \cdots & 1\\ 
\sin(\omega \tau_0) &   \cdots & \sin(\omega \tau_K) \\ 
\cos(\omega \tau_0) &  \cdots & \cos(\omega \tau_K) \\ 
\vdots  &  \ddots & \vdots \\ 
\cos(2P\omega \tau_{0}) & \cdots & \cos(2P\omega \tau_{K})
\end{pmatrix}^{-1}            
\begin{pmatrix}
  1 \\ \sin(\omega t) \\ \cos(\omega t)\\ \vdots \\ \sin(2P \omega t) \\ \cos(2P \omega t)
\end{pmatrix}.
\end{align*}
\end{proposition}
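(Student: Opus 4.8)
The plan is to follow the proof of \Cref{prop:polyG} almost verbatim, with the monomial basis $\{1,t,\dots,t^K\}$ replaced by the trigonometric basis and ordinary Lagrange interpolation replaced by trigonometric interpolation.

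First I would check that for $\mX(t)\in\mathcal{X}_\text{BL}$ every entry of $\mG(t)=\mX(t)^\T\mX(t)$ lives in a $(4P+1)$-dimensional space of trigonometric polynomials. An entry of $\mG(t)$ is an inner product of two columns of $\mX(t)$, hence a sum of products of pairs of terms drawn from $\{1,\sin(p\omega t),\cos(p\omega t):1\le p\le P\}$, and the product-to-sum identities rewrite each such product as a linear combination of $1,\sin(m\omega t),\cos(m\omega t)$ with $0\le m\le 2P$. So there is a vector $\bm{c}_{ij}\in\R^{4P+1}$ with $[\mG(t)]_{ij}=\bm{\psi}(t)^\T\bm{c}_{ij}$, where $\bm{\psi}(t)=(1,\sin(\omega t),\cos(\omega t),\dots,\sin(2P\omega t),\cos(2P\omega t))^\T$ collects the $4P+1=K+1$ trigonometric basis functions in exactly the order of the rows of the matrix in the statement.

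Next I would evaluate at $\tau_0,\dots,\tau_K$ and stack: writing $g_k:=[\mG_k]_{ij}=\bm{\psi}(\tau_k)^\T\bm{c}_{ij}$ and $\bm{g}=(g_0,\dots,g_K)^\T$ gives $\bm{g}=\mM^\T\bm{c}_{ij}$, where $\mM=[\bm{\psi}(\tau_0),\dots,\bm{\psi}(\tau_K)]$ is exactly the $(K+1)\times(K+1)$ matrix in the proposition. Assuming $\mM$ is invertible, $\bm{c}_{ij}=(\mM^\T)^{-1}\bm{g}$, so $[\mG(t)]_{ij}=\bm{\psi}(t)^\T(\mM^\T)^{-1}\bm{g}=\bigl(\mM^{-1}\bm{\psi}(t)\bigr)^\T\bm{g}$; since $(i,j)$ was arbitrary, setting $\vw(t)=\mM^{-1}\bm{\psi}(t)$ gives $\mG(t)=\sum_{k=0}^K w_k(t)\mG_k$, which is the claim.

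The hard part will be the invertibility of $\mM$---the trigonometric analogue of the Vandermonde matrix, and the only step beyond bookkeeping. I would argue by contradiction: if $\mM^\T\bm{v}=\vzero$ for some $\bm{v}\ne\vzero$, then $q(t):=\bm{v}^\T\bm{\psi}(t)$ is a trigonometric polynomial of degree at most $2P$ vanishing at the $4P+1$ points $\tau_0,\dots,\tau_K$, which are pairwise distinct modulo $\tfrac{2\pi}{\omega}$. Writing $q$ in exponential form $q(t)=\sum_{m=-2P}^{2P}\gamma_m e^{im\omega t}$ and substituting $z=e^{i\omega t}$ gives $q(t)=z^{-2P}p(z)$ with $\deg p\le 4P$; since the $\tau_k$ are distinct modulo $\tfrac{2\pi}{\omega}$, the values $z_k=e^{i\omega\tau_k}$ are $4P+1$ distinct points on the unit circle, so $p$ has strictly more roots than its degree and vanishes identically. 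Hence all $\gamma_m=0$, and since the fixed linear map $\bm{v}\mapsto(\gamma_m)_{m=-2P}^{2P}$ is invertible, $\bm{v}=\vzero$, a contradiction. This is exactly where the hypothesis ``distinct modulo $\tfrac{2\pi}{\omega}$'' is used; everything else is identical to the polynomial case.
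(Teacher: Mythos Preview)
Your proof is correct and follows exactly the route the paper indicates: mirror the argument of \Cref{prop:polyG} and then establish invertibility of the trigonometric ``Vandermonde'' matrix. The paper's own proof is two lines---it says the argument is analogous to the polynomial case and defers the invertibility of $\mM$ to a standard reference---so your write-up actually supplies the self-contained argument (via $z=e^{i\omega t}$ and root counting) that the paper omits.
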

\begin{proof} 
The proof is analogous to the polynomial case. We only need to show that the system matrix is full rank which is a standard result \cite{Bass:2005dr}.
\end{proof}
We have thus developed a way to write a time-dependent Gramian in terms of a linear combination of positive semidefinite (constant) basis Gramians.

\subsection{Ambiguities Beyond Rigid Transformations in KDGP}

Same as the static DGP, the KDGP suffers from rigid transformation ambiguity. Namely, the rotated and translated trajectory sets cannot be distinguished from pairwise distance data. However, since at every time instant we can apply a different rigid transform, the set of ambiguities that arise in the KDGP is much larger than just the rigid transforms.

\begin{figure}[t]
\centering
\includegraphics[width=1 \linewidth]{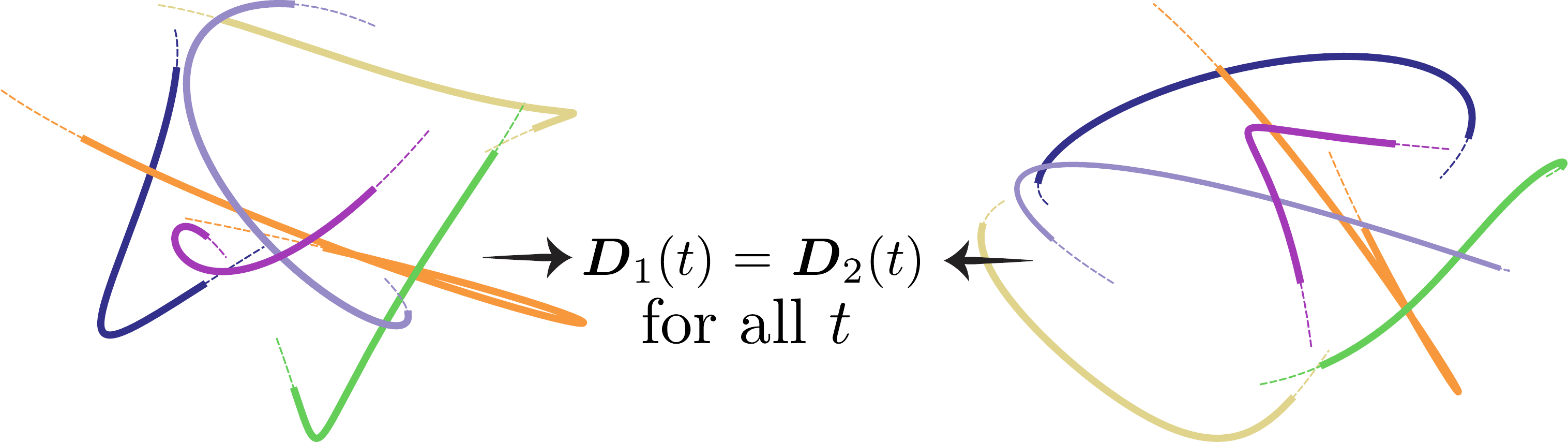}
\caption[Two numerical solutions]{Two trajectory sets which are not rigid transforms of each other, but which generate the same KEDM. Corresponding points have the same color.}\label{fig:kdgp-ambiguity}
\end{figure}

In particular, trajectory sets which look rather differently (nothing like rotations and translations of each other) could generate the same KEDM. We give an example in \Cref{fig:kdgp-ambiguity}. The following straightforward proposition characterizes trajectories that lead to the same KEDM. 
\begin{proposition} \label{prop:distAmb}
Let  $\mX(t)$, $\mY(t)$ be arbitrary trajectories in $\R^{d}$. Define distance equivalence relation as
\begin{equation*}
\mX \stackrel{\mathcal{D}}{\sim} \mY 
\end{equation*}
if and only if $\mathcal{D}(\mX(t)) = \mathcal{D}(\mY(t)), ~ \forall t$. Then, $\mX \stackrel{\mathcal{D}}{\sim} \mY $ if and only if $\mY(t) = \mU(t) \mX(t) + \vc(t) \vone^\T$ where $\mU(t)^\T \mU(t) = \mI$ and $\vc(t)$ is a $\embdim$-dimensional time-varying vector.
\end{proposition}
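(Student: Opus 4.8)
The plan is to prove both directions of the equivalence, with the ``if'' direction being essentially immediate and the ``only if'' direction requiring a pointwise (in $t$) application of the classical fact that two point configurations with the same EDM differ by a rigid transformation.

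For the ``if'' direction, suppose $\mY(t) = \mU(t)\mX(t) + \vc(t)\vone^\T$ with $\mU(t)^\T\mU(t) = \mI$. Fix $t$ and compute the $(i,j)$ entry of $\mathcal{D}(\mY(t))$: it is $\norm{\vy_i(t) - \vy_j(t)}^2 = \norm{\mU(t)(\vx_i(t) - \vx_j(t))}^2$, and the translation $\vc(t)$ cancels in the difference. Since $\mU(t)$ is orthogonal it preserves norms, so this equals $\norm{\vx_i(t) - \vx_j(t)}^2$, the $(i,j)$ entry of $\mathcal{D}(\mX(t))$. As $t$ was arbitrary, $\mathcal{D}(\mX(t)) = \mathcal{D}(\mY(t))$ for all $t$, i.e. $\mX \stackrel{\mathcal{D}}{\sim} \mY$.

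For the ``only if'' direction, assume $\mathcal{D}(\mX(t)) = \mathcal{D}(\mY(t))$ for all $t$ and fix an arbitrary $t$. The idea is to reduce to the static case. Using \eqref{eq:edm_gram_assemble} and the inversion formula $-\tfrac12 \mJ_N \mathcal{D}(\cdot) \mJ_N = (\cdot)^\T(\cdot)$ applied to centered configurations, form the centered trajectory matrices $\overline{\mX}(t) = \mX(t)\mJ_N$ and $\overline{\mY}(t) = \mY(t)\mJ_N$ (subtracting the time-varying centroid). Then equality of the EDMs at time $t$ forces $\overline{\mX}(t)^\T\overline{\mX}(t) = -\tfrac12\mJ_N\mathcal{D}(\mX(t))\mJ_N = -\tfrac12\mJ_N\mathcal{D}(\mY(t))\mJ_N = \overline{\mY}(t)^\T\overline{\mY}(t)$; note that $\mathcal{D}$ is translation invariant so the centroid shift does not change the EDM. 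Having equal $N\times N$ Gram matrices of two $\embdim\times N$ matrices, a standard linear-algebra fact gives an orthogonal $\mU(t) \in \R^{\embdim\times\embdim}$ with $\overline{\mY}(t) = \mU(t)\overline{\mX}(t)$: one way is to take the reduced SVDs and match singular subspaces, extending to a full orthogonal matrix on the orthogonal complement when the common rank is less than $\embdim$. Undoing the centering, $\mY(t) = \mU(t)\mX(t) + \vc(t)\vone^\T$ where $\vc(t)$ is the difference of the centroids of $\mY(t)$ and $\mU(t)\mX(t)$; this holds for every $t$, which is exactly the claimed form.

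The only mild subtlety — and the step I expect to be the main obstacle — is the non-uniqueness and possible irregularity of $\mU(t)$: when the configuration is degenerate (affine dimension strictly less than $\embdim$) the orthogonal factor is not unique, so one cannot assert that $\mU(t)$ is, say, continuous or smooth in $t$. For this proposition, however, the statement only asserts existence of \emph{some} $\mU(t)$ and $\vc(t)$ for each $t$, so pointwise selection suffices and no regularity argument is needed; I would simply remark that $\mU(t)$ need not be continuous. Everything else is bookkeeping with \eqref{eq:edm_gram_assemble} and the centering identity already recorded in \Cref{sec:solving_dgp_edms}.
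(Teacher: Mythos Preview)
Your proof is correct. The paper does not actually supply a proof for this proposition---it is introduced as ``the following straightforward proposition'' and left unproved---so your pointwise reduction to the classical static EDM characterization (two configurations with the same EDM differ by a rigid motion), via the centering identity from \Cref{sec:solving_dgp_edms}, is exactly the intended argument.
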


Requiring that the trajectories follow a particular model (for example polynomial or bandlimited) limits possible choices of the time-varying rigid transform parameters $\mU(t)$ and $\vc(t)$. In particular, known results on spectral factorization of polynomial matrices imply that the orthogonal $\mU(t)$ must be a constant matrix. On the other hand, as long as $\vc(t)$ is polynomial (or bandlimited) of the same degree as $\mX(t)$, it is a legal choice in the sense that the trajectories remain polynomial or bandlimited. But even with a fixed $\mU(t) = \mU$, varying $\vc(t)$ can produce trajectories of rather different shapes which are indistinguishable from their time-varying distances. 

In \Cref{sec:sf}, we propose a method for spectral factorization of kinetic Gramians based on anchor points and show how it resolves the described ambiguities.
In our algorithms we will choose $\vc(t)$ so that the centroid of the point set is kept fixed at the origin at all times, and then recover the correct centroid using anchor points. The following proposition will be useful.

\begin{proposition} \label{prop:AXG}
For trajectories of the form \eqref{eq:x_poly_tra} or \eqref{eq:x_sine_tra}, where $N \geq \embdim$, the following statements are equivalent:
\begin{enumerate}
\item All coefficient matrices have zero mean columns;
\item $\mX(t) \vone = \vzero, ~ \forall t \in \R$;
\item $\mG_k \vone = \vzero, ~ \forall k \in \{0, \cdots ,K\}$.
\end{enumerate}
\end{proposition}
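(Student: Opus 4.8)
The strategy is to prove the cycle of implications $(1)\Rightarrow(2)\Rightarrow(3)\Rightarrow(1)$, exploiting the polynomial (or bandlimited) structure throughout.

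First I would establish $(1)\Rightarrow(2)$. For a polynomial trajectory $\mX(t) = \sum_{p=0}^P t^p \mA_p$, if every $\mA_p \vone = \vzero$, then $\mX(t)\vone = \sum_p t^p \mA_p \vone = \vzero$ for all $t$. The bandlimited case is identical since $\mX(t)\vone$ is a sum of $\mB_0\vone$ plus sinusoids weighted by $\mA_p\vone$ and $\mB_p\vone$. This direction is a one-line computation.

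Next, $(2)\Rightarrow(3)$. Recall $\mG(t) = \mX(t)^\T \mX(t)$, so $\mG(t)\vone = \mX(t)^\T(\mX(t)\vone) = \mX(t)^\T \vzero = \vzero$ for all $t$. In particular, evaluating at $t=\tau_k$ gives $\mG_k\vone = \mG(\tau_k)\vone = \vzero$ for each $k$. This is again immediate.

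The interesting implication is $(3)\Rightarrow(1)$, which I expect to be the main obstacle. Suppose $\mG_k\vone = \vzero$ for all $k$. By \Cref{prop:polyG} (resp. \Cref{prop:sineG}), $\mG(t) = \sum_k w_k(t)\mG_k$, so $\mG(t)\vone = \sum_k w_k(t)\mG_k\vone = \vzero$ for all $t$, hence $\vone^\T \mG(t)\vone = \norm{\mX(t)\vone}^2 = 0$, giving $\mX(t)\vone = \vzero$ identically in $t$. Now I would expand $\mX(t)\vone$ in the trajectory basis: in the polynomial case $\sum_{p=0}^P t^p (\mA_p\vone) = \vzero$ for all $t$, and since $\set{1,t,\ldots,t^P}$ are linearly independent functions, each $\mA_p\vone = \vzero$; in the bandlimited case $\set{1,\sin(p\omega t),\cos(p\omega t)}_{p=1}^P$ are likewise linearly independent, forcing $\mB_0\vone = \vzero$ and $\mA_p\vone = \mB_p\vone = \vzero$. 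The subtlety here — and where the hypothesis $N \ge \embdim$ enters — is ensuring the reduction from $\mX(t)\vone=\vzero$ back to statement (2) is legitimate: one must note that $\mX(t)\vone = \vzero$ is exactly statement (2), so really the work is $(3)\Rightarrow(2)\Rightarrow(1)$, with $(2)\Rightarrow(1)$ being the linear-independence-of-basis-functions argument above. The role of $N \ge \embdim$ is that it guarantees the $\mG_k$ can actually have rank up to $\embdim$ and the various objects are non-degenerate; I would double-check whether it is genuinely needed for this equivalence or is only a standing assumption carried from the surrounding development, and state the argument so that it is clear the equivalence rests solely on the interpolation formula plus linear independence of the trajectory basis functions.
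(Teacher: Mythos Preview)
Your proof is correct and follows the same overall structure as the paper's: the easy directions $(1)\Rightarrow(2)\Rightarrow(3)$ are identical, and $(2)\Rightarrow(1)$ via linear independence of the basis functions $\set{t^p}$ (resp.\ $\set{1,\sin(p\omega t),\cos(p\omega t)}$) matches the paper exactly.

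The one genuine difference is in $(3)\Rightarrow(2)$. The paper argues that $\mG(t)\vone=\mX(t)^\T\mX(t)\vone=\vzero$ and then invokes a rank argument: $\det\mX(t)\mX(t)^\T$ is a nonzero polynomial of degree $2P\embdim$, so $\mX(t)^\T$ has full column rank except at finitely many $t$, whence $\mX(t)\vone=\vzero$ at those $t$ and by continuity everywhere. Your route is shorter: from $\mG(t)\vone=\vzero$ you take the quadratic form $\vone^\T\mG(t)\vone=\norm{\mX(t)\vone}^2=0$ and conclude directly. This is a legitimate simplification, and it also answers your own question about $N\geq\embdim$: the paper's rank argument is where that hypothesis (together with the implicit assumption $\max_t\rank\mX(t)=\embdim$) is used, whereas your norm argument needs neither. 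So you are right to suspect that $N\geq\embdim$ is a standing assumption from the surrounding development rather than essential to the equivalence itself.
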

\begin{proof}
We establish $(1) \Leftrightarrow (2)$ and $(2) \Leftrightarrow (3)$ for the polynomial trajectories and leave the straightforward extension to bandlimited trajectories to the reader. It is obvious that $(1)$ implies $(2)$ and $(2)$ implies $(3)$.

\noindent \textbf{(2) implies (1):} We have $\mX(t)\vone = \sum_{p=0}^{P} (\mA_p\vone) t^{p} = \vzero$. Since the monomials $\set{t \mapsto t^p}_{p=0}^P$ form a linearly independent set, the coefficients $\mA_p \vone $ must all be zero, or in other words, the column centroid of all $\mA_p$ must be at the origin.

\noindent \noindent \textbf{(3) implies (2):} Since $\mG(t) \vone =  \sum_{k=0}^{K}{w_k(t)\mG_k \vone} = \vzero$, we have 
\(
\mG(t)\vone = \mX(t)^\T\mX(t) \vone = \vzero.
\)
The polynomial matrix $\mX(t) \in \R^{\embdim \times N}[t]$ of degree $P$ with $\max_{t \in \R} \rank{\mX(t)} = \embdim$ is rank deficient at finitely many times. To see this note that $\det \mX(t) \mX(t)^\T$ is a polynomial of degree $2P\embdim$ which is not identically zero since $\max_{t \in \R} \rank{\mX(t)} = \embdim$. Therefore, it has at most $2P\embdim$ real roots. Thus, $\mX(t)^\T$ is a tall and full rank matrix except at an at most a finite number of times. Therefore, we have $\mX(t) \vone = \vzero ~ \text{for all}~t \in \R$. 
\end{proof}

\section{Computing the KEDM from Noisy, Incomplete Data by Semidefinite Programming} 
\label{sec:kinetic_gramian_estimation}

In this section we use the basis Gramian representation to derive an algorithm that solves the KDGP. Just as in the static case, we can either search directly for the set of trajectores $\mX(t)$ which reproduces the measured distances, or we can search for the time-varying Gramian $\mG(t)$ and use spectral factorization to estimate $\mX(t)$. In the static case, the two formulations are equivalent (they produce the same solution up to a rigid transform), but the formulation in terms of the Gram matrix led to a convenient semidefinite relaxation. In the time-varying case, we again state both formulations, and argue that the difference is now more significant.

To treat polynomial and bandlimited trajectories at once, we define the symbol $\mTheta$ to mean either $\mTheta = \{ \mA_p \}_{p=0}^{P}$ for the polynomial model or $\mTheta = \{ \mA_p, \mB_p \}_{p=1}^{P} \cup \set{\mB_0}$ for the bandlimited model, and similarly let $\mX_{\mTheta}(t) = \sum_{p = 0}^P \mA_p t^p$ (resp. $\mX_{\mTheta}(t) = \mB_0 + \sum_{p = 1}^P \mA_p \cos(p \omega t) + \mB_p \sin(p\omega t)$).

\paragraph{Formalizing in $\mX$ domain}

In this case, trajectory retrieval is written as
\begin{equation}\label{eq:opt_X}
    \begin{aligned}
    & \underset{\mTheta \in \mathcal{A}}{\text{minimize}}       & &  \sum_{i=1}^T{ \alpha_i \norm{ \wt{\mD}_{t_i} - \mW_i \circ \mathcal{D} \left(\mX_{\mTheta}(t_i) \right) }_{F}^{2}} &  \\
    & \text{subject to}     && \mX_{\mTheta}(t) \vone = \vzero, \forall \, t \in \R, 
    \end{aligned}
% \min_{ \Theta \in \mathcal{A} } \sum_{i=1}^\T{ \alpha_i \| \tilde{\mD}_{t_i} - \mW_i \circ \mathcal{D} \left(\mX_{\Theta}(t_i) \right) \|_{F}^{2}}
\end{equation} where $\mathcal{D}(\mX) = \mathcal{K}(\mX^\T \mX)$, $\wt{\mD}_{t_i}$ is the matrix of partial measured distances at time $t_i$, $\mW_i$ is the adjacency matrix corresponding to measurements, $\alpha_i \geq 0$ are non-negative weights, and $\mathcal{A}$ is the set of all feasible parameters. It is not hard to see that the objective in \eqref{eq:opt_X} is nonconvex in $\mTheta$ (even though the constraint set is convex by \Cref{prop:AXG}). 
Hence, this problem involves minimizing a nonconvex functional which is in general difficult.

\paragraph{Formalizing in $\mG$ Domain}

Next, we derive a semidefinite program inspired by \eqref{eq:stat_dgp} for the trajectory recovery problem. The key ingredient is the basis Gramian representation of $\mG(t)$ from \Cref{sec:trajectories_and_basis_gramians}. Since the actual kinetic Gramian is linear in basis Gramians, the overall objective will be convex as long as the data fidelity metric is convex. The latter holds true since we use the squared Frobenius norm:
\begin{equation}\label{eq:opt_G}
    \begin{aligned}
    & \underset{(\mG_k : \mG_k \succeq 0)_{k=0}^K}{\text{minimize}}       & &  \sum_{i=1}^\T{ \alpha_i \norm{ \wt{\mD}_{t_i} -\mW_i \circ \mathcal{K}\left(\sum_{k=0}^K w_k(t_i) \mG_k\right) }_{F}^{2}} &  \\
    & \text{subject to}     && \mG(t) \vone = \vzero, \forall t \in 
    \R \\
    &&& \mG(t) \succeq 0, \forall t \in \R \\\
    &&& \max_{\mathclap{t \in \R}}~\rank \mG(t) = \embdim.
    \end{aligned}
\end{equation}
The constraints ensure that the solution corresponds to a time-varying Gramian $\mG(t)$ with correct rank. We note that there is no rotation ambiguity associated with this formulation because the Gramian is invariant to rotation and reflection of the points. Translation ambiguity has been resolved by requiring that $\mG(t) \vone = \vzero$ which implies that the recovered point set shall be centered at all times.

\subsection{Equivalence Between \eqref{eq:opt_X} and \eqref{eq:opt_G}}

The two formulations are equivalent if for every possible set of measurements, the solution sets produce the same KEDM. Denoting the optimizers (which could be sets) by $\mTheta^*$ and $(\mG_k^*)_{k = 0}^K$, it should hold that
\begin{equation*}
\mathcal{D} \left(\mX_{\Theta^*}(t) \right) =  \mathcal{K}\left(\sum_{k=0}^K w_k(t) \mG^{*}_k\right), ~ t \in \R.
\end{equation*}

By Propositions \ref{prop:polyG} and \ref{prop:sineG}, for any optimizer $\mTheta^*$ of \eqref{eq:opt_X} and the corresponding trajectory $\mX_{\Theta^*}(t)$, we can find a Gramian basis $(\wt{\mG}_k)_{k=0}^{K}$ such that $\mathcal{D} \left(\mX_{\Theta^*}(t) \right) =  \mathcal{K}\left(\sum_{k=0}^K w_k(t) \wt{\mG}_k\right)$. Therefore,
\[
    J_1(\mTheta^*) = J_2((\wt{\mG}_k)_{k=0}^K) \geq J_2((\mG_k^*)_{k=0}^K),
\]
where $J_1$ denotes the loss in \eqref{eq:opt_X}, and $J_2$ denotes the loss in \eqref{eq:opt_G}.
The question is whether this inequality can be made strict. Could the solution to \eqref{eq:opt_G} lead to a Gramian $\mG(t)$ with no corresponding trajectory in $\mathcal{A}$? An in-depth study of this question is beyond the scope of this paper, but to see that this is indeed possible consider a contrived case of no measurements at all, that is to say, a feasibility search.

Trivially, any $\mTheta \in \mathcal{A}$ is a solution to \eqref{eq:opt_X} and any set of basis Gramians $(\mG_{k} : \mG_k \succeq \mat{0})_{k=0}^{K} $ is a solution to \eqref{eq:opt_G}. By Lemma \ref{lem:specfactor} every Gramian $\mG(t)$ produced by its basis $(\mG_{k})_{k=0}^{K}$ has a polynomial spectral factor, that is, it corresponds to a polynomial trajectory $\mX(t)$ such that $\mG(t) = \mX(t)^\T \mX(t)$. Even though $\mG(t)$ is real, its spectral factor, however, need not be; see \cite{tabaghi2019real} for a characterization of rank-deficient polynomial Gramians $\mG(t)$ without real spectral factors. This situation is fundamentally different from what we had in the static case. Hence, we can construct feasible ``complex trajectories'' which are outside of $\mathcal{A}$. Consequently, the constraints in \eqref{eq:opt_X} are necessary but not sufficient for the two formulations to be equivalent.  Nonetheless, they become equivalent with sufficient measurements:

\begin{proposition}\label{thm:main}
Suppose that $\wt{\mD}_i = \mW_i \circ \mathcal{D}(\mX_\Theta(t_i))$ and \eqref{eq:opt_G} has a unique optimizer $\mG^*(t) = \sum_{k = 0}^K w_k(t) \mG_k^*$. Then 
\begin{equation}
    \label{eq:minimizer-is-exact}
    \mG^*(t) = \mX_{\mTheta^*}(t)^\T \mX_{\mTheta^*}(t)
\end{equation}
where $\mX_{\Theta^*}(t)= \mU \mX_{\mTheta}(t)\mJ_N$ for some orthogonal matrix $\mU \in \R^{\embdim \times \embdim}$ (that is, it is a centered, rotated version of the true geometry).
\end{proposition}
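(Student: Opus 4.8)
The plan is to show that the SDP \eqref{eq:opt_G} has optimal value zero, that this value is attained at the Gramian basis of the \emph{centered} true trajectory, and then to invoke the uniqueness hypothesis to identify the optimizer with that basis; the factorization \eqref{eq:minimizer-is-exact} is then just read off. Throughout I would make explicit the standing assumption that the true geometry genuinely spans $\embdim$ dimensions, i.e.\ $\max_{t}\affdim(\mX_{\mTheta}(t)) = \embdim$; this is forced anyway by the equality rank constraint in \eqref{eq:opt_G} being consistent with noiseless data generated by $\mX_{\mTheta}$.

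First I would pass to the centered trajectory $\mX_{\mTheta}(t)\mJ_N$: centering subtracts the (time-varying) centroid from each column and hence preserves all pairwise distances, so $\mathcal{D}(\mX_{\mTheta}(t)\mJ_N)=\mathcal{D}(\mX_{\mTheta}(t))$ and $\wt\mD_i=\mW_i\circ\mathcal{D}(\mX_{\mTheta}(t_i)\mJ_N)$. Set $\wt\mG(t):=\mJ_N\mX_{\mTheta}(t)^\T\mX_{\mTheta}(t)\mJ_N=(\mX_{\mTheta}(t)\mJ_N)^\T(\mX_{\mTheta}(t)\mJ_N)$ and $\wt\mG_k:=\wt\mG(\tau_k)$. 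By \Cref{prop:polyG} (polynomial case) or \Cref{prop:sineG} (bandlimited case), $\wt\mG(t)=\sum_{k=0}^K w_k(t)\wt\mG_k$ with exactly the weights appearing in \eqref{eq:opt_G}. I would then verify feasibility of $(\wt\mG_k)_{k=0}^K$: each $\wt\mG_k=(\mX_{\mTheta}(\tau_k)\mJ_N)^\T(\mX_{\mTheta}(\tau_k)\mJ_N)\succeq 0$; $\wt\mG(t)\succeq 0$ for all $t$ for the same reason; $\wt\mG(t)\vone=\vzero$ (directly from $\mJ_N\vone=\vzero$, or via \Cref{prop:AXG}); and $\max_t\rank\wt\mG(t)=\max_t\affdim(\mX_{\mTheta}(t))=\embdim$ by the standing assumption.

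Next I would evaluate the objective at $(\wt\mG_k)_k$: for each $i$, $\mathcal{K}\!\big(\sum_k w_k(t_i)\wt\mG_k\big)=\mathcal{K}(\wt\mG(t_i))=\mathcal{D}(\mX_{\mTheta}(t_i)\mJ_N)=\mathcal{D}(\mX_{\mTheta}(t_i))$, so the $i$-th summand is $\alpha_i\|\wt\mD_i-\mW_i\circ\mathcal{D}(\mX_{\mTheta}(t_i))\|_F^2=0$. Since the objective of \eqref{eq:opt_G} is a nonnegative combination of squared Frobenius norms, it is $\geq 0$ everywhere; hence $0$ is the optimal value and $(\wt\mG_k)_k$ is an optimizer. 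By the uniqueness hypothesis $\mG_k^*=\wt\mG_k$ for all $k$, so $\mG^*(t)=\sum_k w_k(t)\mG_k^*=\wt\mG(t)=(\mX_{\mTheta}(t)\mJ_N)^\T(\mX_{\mTheta}(t)\mJ_N)$. Taking $\mX_{\mTheta^*}(t):=\mX_{\mTheta}(t)\mJ_N$ — a centered trajectory that is still polynomial (resp.\ bandlimited) of the same degree, hence an element of $\mathcal{A}$ that moreover reproduces the data exactly and so is also a minimizer of \eqref{eq:opt_X} — yields \eqref{eq:minimizer-is-exact} with $\mU=\mI$. To obtain the general form $\mX_{\mTheta^*}(t)=\mU\mX_{\mTheta}(t)\mJ_N$ for an arbitrary constant orthogonal $\mU$, I would invoke the spectral-factorization uniqueness results of \cite{tabaghi2019real} underlying \Cref{lem:specfactor}: every real polynomial spectral factor of $\mG^*(t)$ of degree $P$ agrees with $\mX_{\mTheta}(t)\mJ_N$ up to a constant orthogonal transformation.

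I expect the only non-routine point to be the feasibility check against the rank constraint in \eqref{eq:opt_G}: because it is stated with equality, one genuinely needs the assumption $\max_t\affdim(\mX_{\mTheta}(t))=\embdim$, and one must be careful to phrase the conclusion in terms of the centered trajectory $\mX_{\mTheta}(t)\mJ_N$ rather than $\mX_{\mTheta}(t)$ itself. The remaining constant-orthogonal ambiguity is exactly the rotation/reflection ambiguity that survives \Cref{prop:distAmb} and that the anchor-based factorization of \Cref{sec:sf} is designed to eliminate.
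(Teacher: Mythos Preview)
Your proof is correct and follows essentially the same construction as the paper: define the centered basis $\wt\mG_k=\mJ_N\mX_{\mTheta}(\tau_k)^\T\mX_{\mTheta}(\tau_k)\mJ_N$, check feasibility, observe that the objective vanishes there, and invoke the uniqueness hypothesis. Your treatment is in fact more careful than the paper's (you flag the equality rank constraint, which the paper glosses as ``at most $\embdim$''); the only superfluous step is the appeal to spectral-factorization uniqueness for the general $\mU$, since $(\mU\mX_{\mTheta}(t)\mJ_N)^\T(\mU\mX_{\mTheta}(t)\mJ_N)$ is independent of the orthogonal $\mU$ and so $\mU=\mI$ already suffices.
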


\begin{proof} 
We prove this proposition by construction. Let us define $\mG^*_k = \mJ_{N}\mX_{\Theta}(\tau_k)^\T\mX_{\Theta}(\tau_k)\mJ_{N}$ for $k \in \{0, \cdots, K\}$ and $\mG^*(t) = \sum_{k=0}^{K}{w_k(t)\mG^*_k}$. From \Cref{prop:polyG,prop:sineG}, we deduce that $\mG^*(t) = \mJ_{N}\mX_{\Theta}(t)^\T\mX_{\Theta}(t)\mJ_{N}$. Hence, $\mG^*(t)$ belongs to the feasible set of \eqref{eq:opt_G} as $\mJ_{N}\mX_{\Theta}(t)^\T\mX_{\Theta}(t)\mJ_{N}$ is a zero-mean positive semidefinite matrix for all $t \in \R$, with rank at most $\embdim$. On the other hand, since $\mathcal{D}(\mX_\Theta(t)) = \mathcal{K}(\mG^*(t))$,  we have $J_2(\mG^*) = 0$. Finally, since \eqref{eq:opt_G} has a unique solution, the minimizer of \eqref{eq:opt_G} must have the form \eqref{eq:minimizer-is-exact}.
\end{proof}

It is useful to interpret the two approaches in \eqref{eq:opt_X} and \eqref{eq:opt_G} in terms of graph-based definition of the KDGP (Problem \ref{prob:kdgp}). The sequence of incomplete and noisy distances, $\wt{\mD}_{t_1}, \cdots, \wt{\mD}_{t_T}$ is modeled as a series of incomplete graphs whose edge weights correspond to the measured distances. The goal of KDGP is to find a node function $x(u,t)$ that maps vertices of measurement graphs to points in $\R^\embdim$ whose pairwise distances match the measured distances at sampling times $t_k \in \mathcal{T}$. From this perspective, the formulation \eqref{eq:opt_X} aims to directly estimate the node function $x(u, t)$ from distance measurements, while in formulation \eqref{eq:opt_G}, we break the KDGP into two subproblems: 

\begin{enumerate}
\item \textbf{Completing the measurement graphs:} This amounts to estimating the edge function, $f(e,t)$ for every $e \in E_t$ instead of only for $e \in E_i$, with $E_i$ being the edges measured at time $t_i \in \mathcal{T}$;
\item \textbf{Estimating the node function, $x(u,t)$}: This is equivalent to spectral factorization of the time-dependent Gramian.
\end{enumerate}
The formulation \eqref{eq:opt_G} solves the first subproblem since it outputs a time-varying Gramian $\mG(t)$ from which we easily get the KEDM as $\mathcal{K}(\mG(t))$. The second problem is addressed in Section \ref{sec:sf}.

Finally, we note that the KEDM formulation in \eqref{eq:opt_G} is a generalization of the static EDM formulation in \eqref{eq:stat_dgp}. To see the equivalence, note that static points are modeled by a polynomial of degree zero, $P=0$, in which case the Gramian becomes $\mG(t) = \mG_0$ since $w_0(t) = 1$. 

\subsection{Practical Considerations: Relax and Sample}

To get a practical algorithm for \eqref{eq:opt_G}, we sample the continuous-time semidefiniteness constraint,
\(
\mG(t) \succeq \vzero, ~ \forall t \in \R,
\)
and relax the non-convex rank constraint. In \Cref{alg:sdp}, we denote the set of sampling times for this constraint by $\mathcal{T}_\text{psd}$.

% If the rank of estimated Gramian $\mG(t)$ tends to deviate from $d$, typically at the times when we have no measurements, we can remedy this issue by picking a appropriate time, $\tau$, and minimize the following regularized objective function

% \begin{equation*}
% J_2((\mG_k))  + \lambda \| \sum_{k=0}^{K}{w_{k}(\tau) \mG_k} \|_{*}
% \end{equation*}
% where $J_2$ denotes the loss in \eqref{eq:opt_G} and $\| \cdot \|_{*}$ is the nuclear norm operator. This regularization will encourage a low rank estimated Gramian at time $\tau$.

In relaxations for static EDMs, instead of simply removing the rank constraint, it is often replaced by a regularizer. Perhaps counterintuitively (see \cite{Dokmanic:2015eg} for a longer discussion), a strategy that works well is to \emph{maximize} the rank of the Gram matrix, as this corresponds to pushing the points apart and minimizing the embedding dimension. We use a similar strategy in our KEDM semidefinite relaxation.

Even with this additional regularization, due to noise and numerical issues of the off-the-shelf semidefinite solvers, the recovered Gram matrices will rarely be \emph{exactly} rank-$\embdim$. To address this, we apply a standard rank projection to the retrieved Gramians by setting the least significant $N - \embdim$ singular values to 0.

\begin{algorithm}
\caption{Semidefinite relaxation for KEDM}\label{alg:sdp}
\begin{algorithmic}[1]
\Procedure{$\mathtt{SDR}$}{$\set{t_i}_{i=1}^T, \{\wt{\mD}_{t_i}\}_{i=1}^T, \set{\mW_i}_{i=1}^T$}
% \State Solve for $(\mG^{(i+1)}_{k})_{k=0}^{K}$, using any semidefinite algorithm with $\{\tilde{\mD}_{t_i}\}_{i=1}^T$ as input and warm start of $( \mG^{(i)}_{k})_{k=0}^{K}$. Trajectory model and all other parameters are assumed to be known.
\State Solve for $\set{\mG_k}$:
\begin{align*}
& \text{minimize}       & &  \sum_{i=1}^T \alpha_i \norm {\tilde{\mD}_{t_i} - \mW_i \circ \mathcal{K}\left(\sum_{k=0}^K w_k(t_i) \mG_k\right)}_{F}^{2} \\
&&& \quad \quad - \lambda \sum_{k=0}^K \trace(\mG_k)\\
& \text{w.r.t} & & \mG_0, \cdots, \mG_K  \succeq \mat{0} \\
& \text{such that} & &  \mG_k\vone = \vzero \qquad  \qquad \quad ~ k \in \{0, \cdots, K\},\\
&                   & &   \sum_{k=0}^{K}{w_k(t)\mG_k} \succeq \vzero   \qquad  t \in \mathcal{T}_{\text{psd}}.
\end{align*}
\State $\mG_k \gets \mathtt{RankProjection}(\mG_k, \embdim)$, $k \in  \set{0, \cdots, K}$
\State \textbf{return} $\wh{\mD}(t) = \mathcal{K} \left( \sum_{k=0}^{K} w_k(t) \mG_k \right) $
\EndProcedure
\end{algorithmic}
\end{algorithm}

% \begin{figure}[t]
% \centering
% \includegraphics[width=0.45\textwidth]{measurements.eps}
% \caption{The graph-theoretic interpretation of the proposed semidefinite program. The top row represents the set of incomplete measurements as a sequence of graphs. The bottom row represents the solution of the SDP which is a sequence of completed measurement graphs.}
% \end{figure}

\section{Spectral Factorization of the Gramian} \label{sec:sf}

\Cref{alg:sdp} produces a time-varying Gramian whose KEDM best represents the measured distance sequence. In this section, we show how to estimate the corresponding trajectory by factorizing the Gramian as $\mG(t) = \mX(t)^\T \mX(t)$ where $\mX(t)$ is the set of point trajectories. We know that the trajectory can only be estimated up to a time-invariant rotation (and possibly reflection) \cite{ephremidze2014elementary} and a time-varying translation. To resolve this uncertainty, we introduce a set of anchors---points whose absolute positions are known.

In practice, anchors might correspond to nodes that are equipped with a positioning technology such as GPS. Because the anchors move (unlike in the usual DGP), we have more possibilities for anchor measurements than in the static case. For our trajectory models, we only need to know the positions of the anchor points at some fixed, finite set of times, but we could measure positions of different sets of points at different times.

Given a spectral factor\footnote{One out of infinitely many possible.} $\overline{\mX}(t)$, of the time-varying Gramian, the true trajectory, $\mX(t)$, can be found as 
\begin{equation}\label{eq:XtoXbar}
\mX(t) = \mU \overline{\mX}(t) + \vx(t)\vone^\T + \mN(t).
\end{equation}
where $\mU$ is a $\embdim \times \embdim$ orthogonal matrix, $\vx(t)$ is a $\embdim \times 1$ time-varying vector and $\mN(t)$ represents the net effect of model mismatch and measurement noise. The matrix $\mU$ is constant (by the spectral factorization theorem) whereas the translation factor $\vx(t)$ is a function of time.
On the other hand, the translation factor $\vx(t)$ must belong to the same trajectory model as $\mX(t)$ (polynomial or bandlimited). Hence, $\vx(t)$ can be written as
\[
    \vx(t) = \mM \vz(t),
\]
where for the polynomial model, $\mathcal{X}_{\text{poly}}$ we have
\begin{equation*}
\vz(t) = [1, \ t, \ \cdots, \ t^P]^\T
\end{equation*}
and $\mM \in \R^{d \times (P+1)}$, and for the bandlimited model, $\mathcal{X}_{\text{BL}}$, 
\begin{equation*}
\vz(t) = [1, \sin(\omega t), \cos(\omega t), \cdots, \sin(P \omega t), \cos(P\omega t)]^\T
\end{equation*}
and $\mM \in \R^{ d \times (2P+1)}$.

%On the other hand, if $\overline{\mX}(t)$ belongs to $\mathcal{X}_{\text{poly}}$ or $\mathcal{X}_{\text{BL}}$, so should $\vx(t)$. We can thus write 
%\[
%    \vx(t) = \mM \vz(t),
%\]
%where for the polynomial model we have
%\begin{equation*}
%\vz(t) = [1, \ t, \ \cdots, \ t^P]^\T
%\end{equation*}
%and $\mM \in \R^{d \times (P+1)}$, and for the bandlimited model we have
%\begin{equation*}
%\vz(t) = [1, \sin(\omega t), \cos(\omega t), \cdots, \sin(P \omega t), \cos(P\omega t)]^\T
%\end{equation*}
%and $\mM \in \R^{ d \times (2P+1)}$.

A difficulty compared to the static case is that spectral factorization of polynomial Gram matrices is not straightforward and becomes brittle in the presence of noise. It is thus desirable to develop trajectory estimation methods that do not require full polynomial factorization. We show that this is possible at the expense of additional anchor measurements.

\subsection{Known Spectral Factor} 
\label{sub:known_spectral_factor}

We start by assuming that we have access to \emph{some} spectral factor $\overline{\mX}(t)$ such that $\mG(t) = \overline{\mX}(t)^\T \overline{\mX}(t)$. In this case, to estimate the unknown rotation and translation, we assume that at $L$ distinct times $\tau_1, \ldots, \tau_{L}$ we measure positions of points $\mathcal{I}_1, \ldots, \mathcal{I}_L$, with $\mathcal{I}_\ell$ being the index set of points whose positions are measured at $\tau_\ell$. We let $\mX_{\mathcal{I}_\ell}$ denote the column selection of $\mX(\tau_\ell)$ corresponding to indices in $\mathcal{I}_\ell$.

An estimate for $\mU$ and $\mM$ can be computed by solving
\begin{equation*}
\argmin_{\mU \in M_{\embdim}(\R), \mM \in \R^{\embdim \times L}} \sum_{\ell = 1}^L \norm{ \mX_{\mathcal{I}_\ell} - \mU \overline{\mX}(\tau_\ell) - \mM \vz(\tau_\ell)\vone^\T }_{F}^{2}
\end{equation*}
where $M_{\embdim}(\R)$ is the set of $\embdim \times \embdim$ orthonormal matrices and $L= P+1$ (resp. $2P+1$) for polynomial (resp. bandlimited) trajectories. This is a non-convex problem because $M_{\embdim}(\R)$ is a non-convex set. 

The above optimization can be decoupled as in standard Procrustes analysis provided that there exists a time $\wt{\tau}_\ell \in \set{\tau_1, \ldots, \tau_{L}}$ at which we know the positions of at least $\embdim + 1$ anchors. In this case, $\mU$ can be estimated at this time alone using the technique described in  Section \ref{sub:procrustes}. Once the rotation $\wh{\mU}$ is found, we can estimate the matrix $\mM$ by solving the following convex problem:
\begin{equation*}
\wh{\mM} = \argmin_{\mM \in \R^{\embdim \times L}} \sum_{\ell=1}^{L}{\norm{ \mM \vz(\tau_\ell) - \frac{1}{N_{\tau_\ell}} \big(\mX_{\mathcal{I}_\ell}(\tau_\ell) - \wh{\mU} \overline{\mX}(\tau_\ell)\big) \vone }_{2}^{2}}
\end{equation*}
where $N_{\tau_l} \geq 1$ for $\ell \geq 2$. 
Finally, we note that matching $\embdim$ instead of $\embdim + 1$ points would leave us with a flip ambiguity, so $\embdim + 1$ is indeed the smallest number of anchors that lets us use the Procrustes analysis.

\subsection{Unknown Spectral Factor (Practical Algorithm)} % (fold)
\label{sub:practical_algorithm_unknown_spectral_factor}

The previous section implies that $L+\embdim$ anchor points are necessary to estimate the rotation $\mU$ and translation $\mM$ provided that a spectral factor  $\overline{\mX}(t)$ of $\mG(t)$ is given. Unfortunately, algorithms for spectral factorization rely on unstable computations involving determinants and are often computationally demanding, which makes them unsuitable for our application where noise can be significant \cite{ephremidze2018algorithmization}. To avoid this step, we propose a method which relies on additional anchor measurements. 

Assume that at each of $L$ distinct times we measure positions of at least $\embdim+1$ anchors; as before, denote the anchor indices at time $\tau_\ell$ by $\mathcal{I}_\ell$, and the corresponding positions by $\mX_{\mathcal{I}_\ell}$. Now we can use Procrustes analysis at each time individually (that is, applied to constant matrices that are evaluations of time-varying matrices at these particular times) to estimate rotation and translation, $\wh{\mU}_{\tau_l}$ and $\wh{\vx}(\tau_l)$ at time $\tau_l$. Denote by $\overline{\mX}(\tau_\ell)$ any matrix such that $\overline{\mX}(\tau_\ell)^\T \overline{\mX}(\tau_\ell) = \mG(\tau_\ell)$; since this involves only constant matrices, we can use the eigendecomposition method described in Section \ref{sec:solving_dgp_edms} to compute $\overline{\mX}(\tau_\ell)$.

Note that in doing so, there is no guarantee that these ``marginal'' estimates for the rotation correspond to the unique global $\mU$ we are looking for, because we do not exploit any temporal model in computing the spectral factors $\overline{\mX}(\tau_\ell)$. In other words, all $\wh{\mU}_{\tau_l}$ could be distinct, and in principle they will. Nevertheless, we can use them to estimate the trajectory by solving the following problem:
\begin{equation}
\label{eq:spectral-factorization-convex}
\mTheta^* = \argmin_{\mTheta \in \mathcal{A}}~\sum_{l=1}^{L}{ \norm{ \mX_{\mTheta}(\tau_l) - \big( \wh{\mU}_{\tau_l}\overline{\mX}(\tau_l)+ \wh{\vx}(\tau_l) \vone^\T \big) }_{F}^{2}},
\end{equation}
and using $\mX_{\mTheta^*}(t)$ as our estimate of the trajectory. 

The logic behind \eqref{eq:spectral-factorization-convex} is that even though the matrices $\wh{\mU}_{\tau_\ell}$ are ``wrong'', the product $\wh{\mU}_{\tau_l}\overline{\mX}(\tau_l)$ is correct thanks to the anchors. With sufficiently many marginal estimates, there is a unique set of polynomial trajectories passing through them. The described procedure is summarized in Algorithm \ref{alg:spectral_factorization} and the complete KEDM trajectory localization algorithm with anchors in \Cref{alg:kedmAL}.

\begin{algorithm}
\caption{Spectral factorization}\label{alg:spectral_factorization}
\begin{algorithmic}[1]
\Procedure{$\mathtt{SpectralFactorization}$}{$\wh{\mD}(t)$, $\{\mX_{\mathcal{I}_\ell}\}_{l=1}^{L}$} \newline
\Comment{$\mX_{\mathcal{I}_\ell}$: Anchor points at different times, $\tau_1,\cdots, \tau_L$.}
\For{ $l\in \{1, \cdots , L\}$ }
\State $\wh{\mG}(\tau_l) \gets -\frac{1}{2}\mJ_N \wh{\mD}(\tau_l) \mJ_N$
\State $\overline{\mX}(\tau_l) \gets \wh{\mG}(\tau_l)^{1/2}$
\State Solve for $\wh{\mU}_{\tau_l}$ using Procrustes analysis
\State Estimate the translation at time $\tau_\ell$:
\begin{equation*}
\wh{\vx}(\tau_l) \gets \frac{1}{N_{\tau_l}}(\mX_{\mathcal{I}_\ell} - \wh{\mU}\overline{\mX}_{\mathcal{I}_\ell}(\tau_l)\mS_l)\vone
\end{equation*}
\State Estimate point positions at time $\tau_\ell$:
\begin{equation*}
\wh{\mX}(\tau_l)  \gets \wh{\mU}_{\tau_l}\overline{\mX}(\tau_l) + \wh{\vx}(\tau_l)\vone^\T
\end{equation*}
\EndFor
\State Find the trajectory:
\begin{equation*}
\Theta \gets \argmin_{\Theta \in \mathcal{A}} \sum_{\ell=1}^{L}{ \| \mX_{\Theta}(\tau_\ell) - \big( \wh{\mU}_{\tau_\ell}\overline{\mX}(\tau_\ell)+ \wh{\vx}(\tau_\ell) \vone^\T \big) \|_{2}^{2}}
\end{equation*}
\State \textbf{return} $\mX_{\Theta}(t)$
\EndProcedure
\end{algorithmic}
\end{algorithm}

\begin{algorithm}
\caption{End-to-end KEDM algorithm}\label{alg:kedmAL}
\begin{algorithmic}[1]
\Procedure{$\mathtt{KEDM}$}{$\{\mX_{\mathcal{I}_\ell}\}, \set{t_i}, \{\wt{\mD}_{t_i}\}, \set{\mW_i}$}
%\Comment{$\mX_0 \in \R^{d \times d}$: anchor points at $t=0$}
\State $\wh{\mD}(t) = \mathtt{SDR}(\set{t_i}, \{\wt{\mD}_{t_i}\}, \set{\mW_i})$
\State $\mX_\Theta(t) = \mathtt{SpectralFactorization}(\wh{\mD}(t)$, $\{\mX_{\mathcal{I}_\ell}\})$
\State \textbf{return} $\mX_{\Theta}(t)$
\EndProcedure
\end{algorithmic}
\end{algorithm}

%%%%%%%%%%%%%%%%%% END %%%%%%%%%%%%%%%%%%%%%%

\section{Simulation Results} \label{sec:results}

In this section we empirically evaluate different aspects of the proposed algorithm. We first study the influence of sampling time distribution in \Cref{sec:err_bound} as this choice affects the other experiments. In \Cref{sec:sparse}, we look at the maximum achievable measurement sparsity\footnote{We use the term ``sparsity'' to refer to  sparse or subsampled measured data, as is common in the inverse problems theory.}
\label{sec:sparse}: KDGP measurements are a sequence of incomplete EDMs and it is interesting to understand what proportion of missing entries we can tolerate.\footnote{In all experiments we sample the positive semidefinite constraint at random times. We have found empirically that this choice does not matter much, unlike the choice of measurement  times. The exact details can be found in the reproducible code at \url{https://github.com/swing-research/kedm/} which offers superior documentation.}
% \begin{itemize}
% \item For polynomial model, we uniformly generate samples $t_i$ from $[T^{-}, T^{+}]$ for some $T^{-} \ll 0 \ll T^{+}$. Then, $\mathcal{T}_{\text{psd}}^{+} = \set{e^{t_i}}$. Similarly $\mathcal{T}_{\text{psd}}^{-} = \set{-e^{t_i}}$ and $\mathcal{T}_{\text{psd}} = \mathcal{T}_{\text{psd}}^{+} \cup \mathcal{T}_{\text{psd}}^{-}$.
% \item For bandlimited, $\mathcal{T}_{\text{psd}}$ is comprised of uniformly generated samples in $[0,\frac{2\pi}{\omega}]$.
% \end{itemize}
%run a batch of experiments for various choices of trajectory parameters. 
Finally, in \Cref{sec:noise} we study the effect of measurement noise on the quality of the estimated trajectories.

We end this section by applying our algorithms to a synthetic problem of satellite localization from noisy and very sparse distance measurements.

%\textbf{How do you sample the psd constraint? Should mention.}

\subsection{Distribution of Sampling Times} \label{sec:err_bound}

The measurements in \Cref{alg:kedmAL} are a sequence of (incomplete) snapshots of KEDM at different times, $\{ \mW_i \circ \mathcal{D}(\mX(t_i)) \}_i$. We experiment with different choices of sampling times $\set{t_i}_i$. To exclude the influence of other factors, we assume having access to all pairwise distances, and we contaminate the measurements by noise. Note that without noise, we can compute the Gramian basis simply by solving a linear system of equations so that any sampling strategy with sufficiently many samples gives the perfect estimation. 

Let the true, noiseless distances be $d_{ij}(t) = \norm{\vx_i(t) - \vx_j(t)}$, and noisy measurements given as
\begin{align}
\wt{d}_{ij}(t) &=  d_{ij}(t) + n_{ij}(t)  \label{eq:measurement_noise}
\end{align}
where $n_{ij}(t) \sim \mathcal{N}(0, \sigma^2)$ is iid measurement noise. The correspponding KEDMs are $\mD(t) = [d_{ij}^{\,2}(t)]_{ij}$ and $\wt{\mD}(t) = [\wt{d}_{ij}^{\,2}(t)]_{ij}$.

To compare the different sampling protocols, we average the reconstruction error over many trajectory and noise realizations. The reconstruction error is defined as
\begin{equation*}
e_D(t) = \frac{\| \mD(t) - \wh{\mD}(t) \|_{F}}{\| \mD(t) \|_{F}}.
\end{equation*}
where $\wh{\mD}(t) = \mathtt{SDR}((\set{t_i}, \{\wt{\mD}_{t_i}\}, \set{\mW_i})\}_{i=1}^T)$ is the KEDM estimated by \Cref{alg:kedmAL}. The goal is to determine which sampling pattern  minimizes $e_D(t)$ for all $t$ in the interval of interest $[T_1, T_2]$.
\begin{figure}[t]
\centering
\includegraphics[width=1\linewidth]{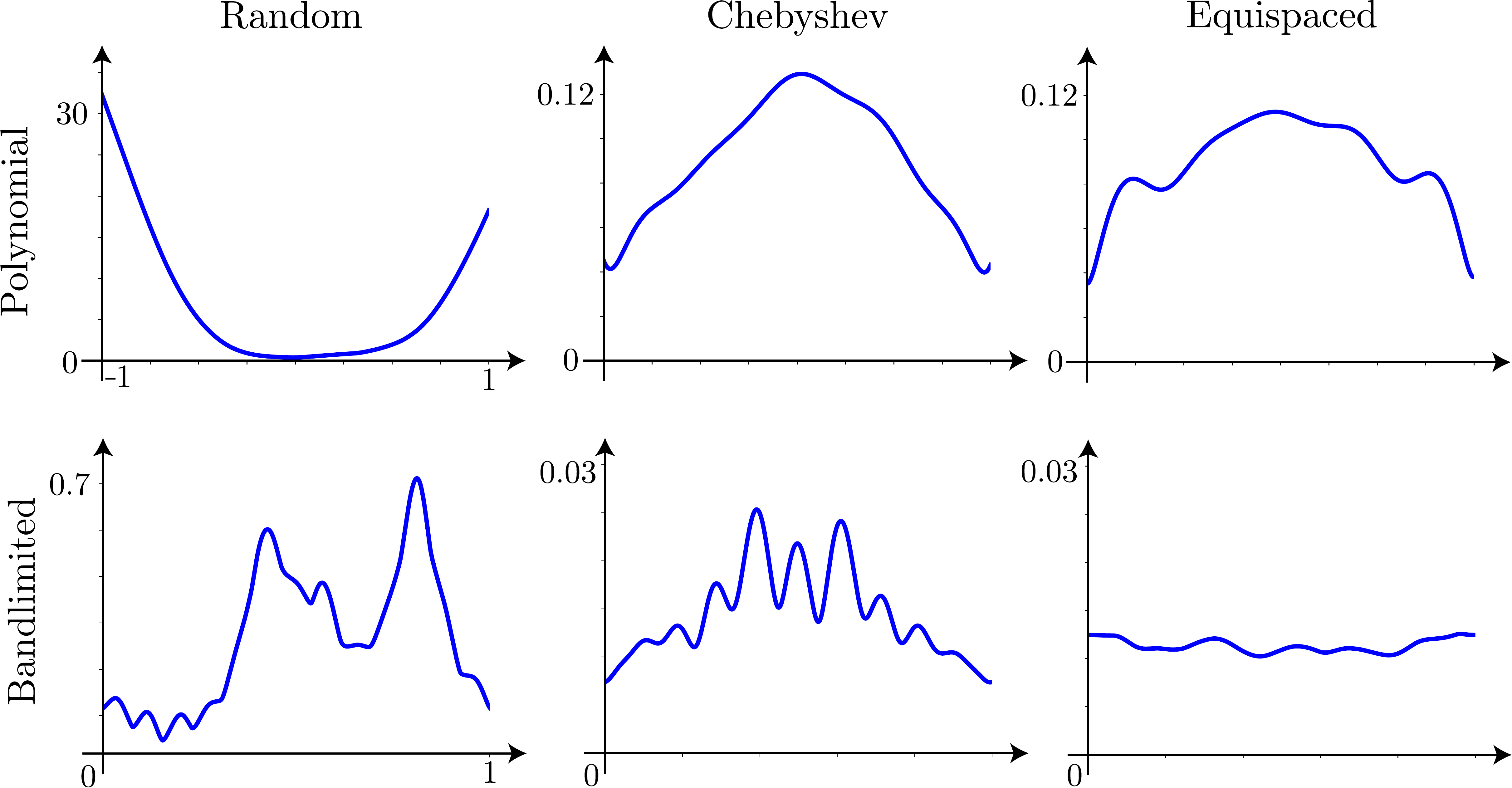}
\caption{Relative reconstruction error $e_D(t)$ averaged over $M=200$ realizations. The number of points is $N=10$, ambient dimension $d=2$, trajectory degree $P = 3$ and noise variance $\sigma^2 = 1$ for both models. The trajectory parameters, $\mA_p$, are drawn iid Gaussian---real valued for polynomial and complex for bandlimited with complex exponential basis. The sampled interval of interest is $[-1,1]$ for the polynomial and $[0, 1]$ for the bandlimited model.}
\label{fig:error_plots}
\end{figure}
In \Cref{fig:error_plots}, we show the average errors for the following sampling patterns:
\begin{itemize}
\item random: $t_i \sim \text{Unif}([T_1,T_2])$,
\item Chebyshev: 
\(
t_i  =  \tfrac{1}{2}(T_1+T_2) + \tfrac{1}{2}(T_2-T_1)\cos(\tfrac{2i-1}{2T}\pi),
\)
\item equispacesd: $t_i = T_1 +(T_2-T_1)	\frac{i}{T}$,
\end{itemize}
where $i = 1,\cdots, T$. We can see that random sampling performs poorly for both the polynomial and the bandlimited model. Chebyshev and equispaced nodes give a similar relative error, with equispaced nodes performing slightly better for the bandlimited model. Studying individual realizations shows that the worst-case error for Chebyshev and equispaced sampling is on the same order as the average error, but it is much worse for random sampling: large reconstruction errors occur when two consecutive measurement times are far apart. In the following experiments, we use equispaced measurement times.

\subsection{Measurement Sparsity}

Trajectory estimation from distances is a nonlinear sampling problem, with trajectory models allowing us to trade spatial for temporal samples.
Here we empirically study the maximum sparsity level for spatial measurements. Given a sequence of measurement masks $\mW_1, \cdots, \mW_T \in \{0,1\}^{N \times N}$, the sparsity level, $0 \leq S \leq 1$, is defined as the ratio of average to total number of pairwise distances:
\begin{equation*}
S = \frac{1}{\binom{N}{2}}\frac{1}{T}\sum_{i=1}^{T}{\textrm{\# of missing measurements at time}~t_i}.
\end{equation*}
We can expect the maximum sparsity level to vary with factors such as 
the trajectory model, temporal sampling pattern, measurement masks, and noise.
To evaluate it, we fix parameters the trajectory class, degree, number of points, and ambient dimension. We declare a localization experiment successful if the relative trajectory mismatch,
\[
e_X = \int_{\mathcal{T}}{{\| \mX(t)-\wh{\mX}(t)\|_{F}}/{\| \mX(t)\|_{F}} \, \di t},
\]
which we approximate by discretizing $\mathcal{T}$, is below some prescribe threshold $\delta$. We are interested in numerically evaluating the probability that the localization succeeds (within tolerance $\delta$) if on average over sampling times, $m$ pairwise distances are missing. Denote this probability by $p(\delta, m)$. We would like to find conditions on $m$ such that $p(\delta, m)$ is large. In particular, for $0 \leq q < 1$, let $\overline{m}(\delta, q)$ be the largest $m$ such that $p(\delta, m) \geq q$.

We run $M$ localization trials for different realizations of random trajectories, and denote the number of succesful trials by $M_1$. For a given average number of missing pairwise distances $m$, the probability of correct localization is estimated as $\wh{p}_M(\delta, m) = \frac{M_1}{M}$. The estimate of $\overline{m}(\delta, q)$ is then simply
\[
    \wh{\overline{m}}(\delta, q) := \max \set{m \ : \ \wh{p}_M(\delta, m) \geq q}.
\]
To compute $\wh{\overline{m}}(\delta, q)$, we increase the number of missing measurements per sampling time, $m$, and count the number of $\delta$-accurate estimates to compute $\wh{\overline{m}}(\delta, q)$ and the corresponding $\wh{S}(\delta, q) = {\wh{\overline{m}}(\delta, q)}/{\binom{N}{2}}$.

\begin{figure}[t]
\centering
\includegraphics[width=1\linewidth]{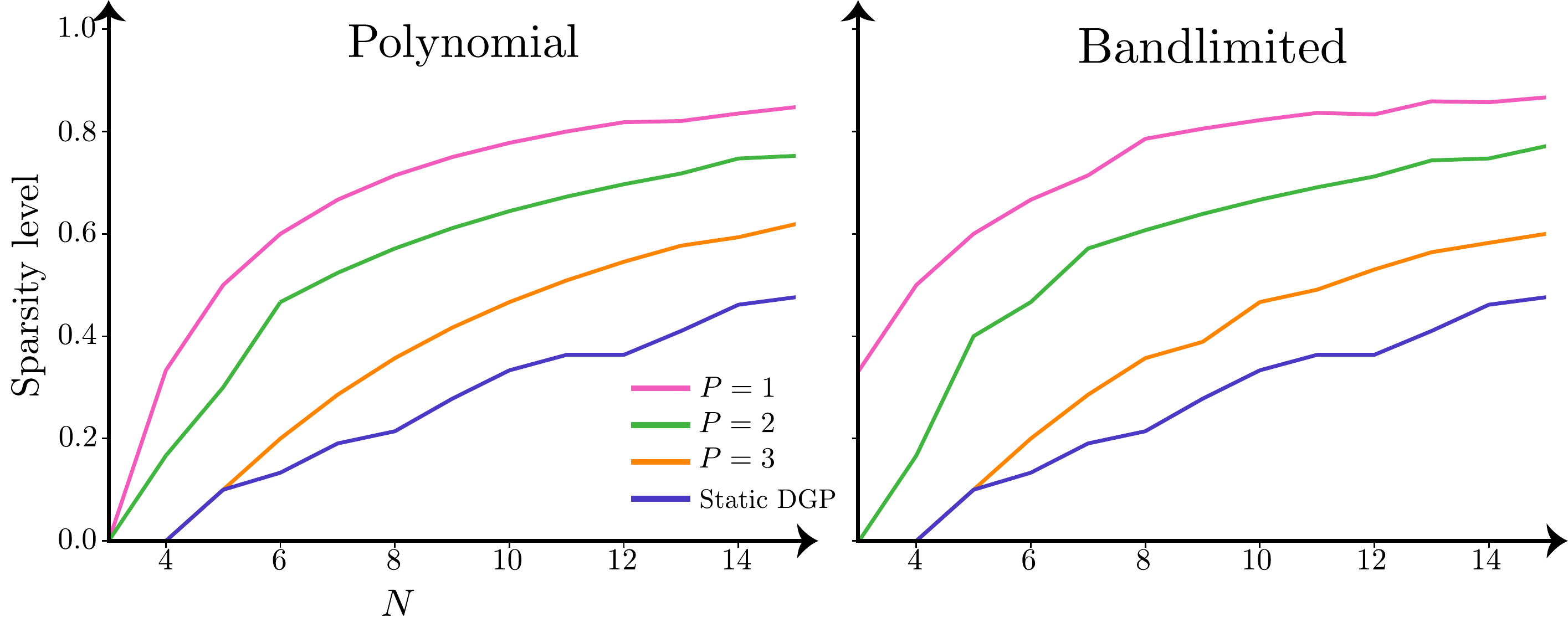}
\caption{The estimated sparsity level $\wh{S}$ for polynomial degrees $P$ and numbers of points $N$. The success threshold $\delta$ is set to $0.99$ and the target fraction of successful reconstructions $q$ to $0.9$.}\vspace{-3mm}
\label{fig:sparsity}
\end{figure}

In the first experiment, we fix the number of sampling times, $T$, and vary the number of points $N$ and polynomial (or bandlimited) degree $P$. Specifically, in \Cref{fig:sparsity} we choose $T = 7$ for polynomial and $T = 13$ for bandlimited models.

As expected, we observe that for a fixed $N$, as $P$ grows (and consequently the number of parameters) the allowable sparsity level decreases, meaning that more complicated trajectories require more spatial samples. This is due to fact that ratio of number of measurements, which is fixed in this case, to number of parameters decreases. Importantly, compared to the static DGP, we see that KEDMs and the proposed semidefinite relaxation allow us to measure fewer distances at any given time, and compensate for this by sampling at multiple times.

In the second experiment we attempt to better characterize the observed spatio--temporal sampling tradeoff. To this end, we fix the parameters so that the ratio of the number of measurements to the number of the degrees of freedom is constant. That is, we keep the number of sampling times proportional to the number of basis Gramians, $T= K+1$ for the polynomial and $T = 2K+1$ for the bandlimited model.

As \Cref{tab:sparsity_normal_poly,tab:sparsity_normal_band} show, with this scaling the sparsity level is approximately constant as the polynomial degree $P$ grows. In other words, even though the trajectories become more and more complicated, we can keep the number of spatial measurements fixed as long as we adjust the number of temporal sampling instants. The empirical observation that the required number of measurements scales linearly with the number of the degrees of freedom suggests that
the proposed algorithms require an order-optimal number of samples.

% However, there is a meaningful difference between sparsity levels for $P=0$, i.e. static DGP, and $P \neq 0$ models. For simplicity, let us compare the polynomial models with $	P=0$ and $P=1$. In the static model of $P=0$, we sample the distance matrix one time and estimate point positions at that time, i.e. estimate $\mA_0$. On the other hand, for $P = 1$ model, we sample KEDM at $K+1 = 3$ time instants to estimate $\mA_0, \mA_1$. This redundancy in parameterization of Gramian $\mG(t)$, which is due to convolution operator in \eqref{eq:g_poly_tra}, lets us achieve sparser measurements in non-trivial, $P \neq 0$, trajectory models. 

\begin{table*}[ht]
\caption{Maximal sparsity for the polynomial model and $d=2$.} 
\centering
\begin{tabular}{@{}llllllllllll@{}}
\toprule$P \backslash N$ & $5$ & $6$ & $7$ & $8$& $9$ & $10$ & $11$& $12$& $13$ & $14$ & $15$ \\
\midrule
% $P=0$ & \cellcolor{Gray}0.1 & 0.13 &\cellcolor{Gray} 0.19 & 0.21 &  \cellcolor{Gray}0.27 & 0.33 & \cellcolor{Gray}0.36 & 0.36 & \cellcolor{Gray}0.41 &   0.46 & \cellcolor{Gray}0.47\\
$P=1$ & 0.1 &  \cellcolor{Gray}0.2 & 0.28 & \cellcolor{Gray}0.39 & 0.44 & \cellcolor{Gray}0.46 & 0.52 & \cellcolor{Gray}0.56 &  0.57 & \cellcolor{Gray}0.60 &  0.62\\
$P=2$ & 0.1 &  \cellcolor{Gray}0.2 & 0.33 & \cellcolor{Gray}0.35 & 0.41 & \cellcolor{Gray}0.46 & 0.51 & \cellcolor{Gray}0.54 &  0.57 & \cellcolor{Gray}0.60 &  0.62\\
$P=3$ & 0.1 &  \cellcolor{Gray}0.2 & 0.28 & \cellcolor{Gray}0.35 & 0.41 & \cellcolor{Gray}0.46 & 0.51 & \cellcolor{Gray}0.54 &  0.57 & \cellcolor{Gray}0.59 &  0.62 \\
\bottomrule
\end{tabular}
\label{tab:sparsity_normal_poly}
\end{table*}

\begin{table*}[ht]
\centering
\caption{Maximal sparsity for the bandlimited model and $d=2$.} 
\begin{tabular}{@{}llllllllllll@{}}
\toprule
$P \backslash N$ & $5$ & $6$ & $7$ & $8$& $9$ & $10$ & $11$& $12$& $13$ & $14$ & $15$ \\
\midrule
% $P=0$ & \cellcolor{Gray}0.1 & 0.13 &\cellcolor{Gray} 0.19 & 0.21 &  \cellcolor{Gray}0.27 & 0.33 & \cellcolor{Gray}0.36 & 0.36 & \cellcolor{Gray}0.41 &   0.46 & \cellcolor{Gray}0.47\\
$P=1$ & 0.1 &  \cellcolor{Gray}0.26 & 0.33 & \cellcolor{Gray}0.39 & 0.44 & \cellcolor{Gray}0.48 & 0.51 & \cellcolor{Gray}0.56 &  0.57 & \cellcolor{Gray}0.60 &  0.63\\
$P=2$ & 0.1 &  \cellcolor{Gray}0.2 & 0.28 & \cellcolor{Gray}0.35 & 0.41 & \cellcolor{Gray}0.44 & 0.49 & \cellcolor{Gray}0.53 &  0.56 & \cellcolor{Gray}0.60 &  0.63\\
$P=3$ & 0.1 &  \cellcolor{Gray}0.2 & 0.28 & \cellcolor{Gray}0.35 & 0.38 & \cellcolor{Gray}0.46 & 0.49 & \cellcolor{Gray}0.53 &  0.56 & \cellcolor{Gray}0.58 &  0.60 \\
\bottomrule
\end{tabular}
\label{tab:sparsity_normal_band}
\end{table*}

\subsection{Noisy Measurements}\label{sec:noise}

We again quantify the influence of noise by the relative trajectory mismatch. We fix a trajectory, shown in \Cref{fig:sketch}, and a set of distance sampling times $\{ t_k \}_{k=0}^K$, and generate many realizations of noisy measurement sequences $\wt{\mathcal{D}}_{t_0}, \cdots, \wt{\mathcal{D}}_{t_K}$ with the same noise variance $\sigma^2$. The i.i.d. noise is added to the non-squared distances. The empirical trajectory mismatch is an average of relative trajectory mismatches over realizations, $\frac{1}{M}\sum_{m}{e_X^{(m)}}$.

In \Cref{fig:sketch}, we show many estimated trajectories $\wh{\mX}(t)$. As expected, the mismatch increases with measurement noise $\sigma^2$ and decreases with the number of measurements. In all cases, the estimated trajectories concentrate around the true ones. 

\begin{figure}[t]
\centering
\includegraphics[width=1\linewidth]{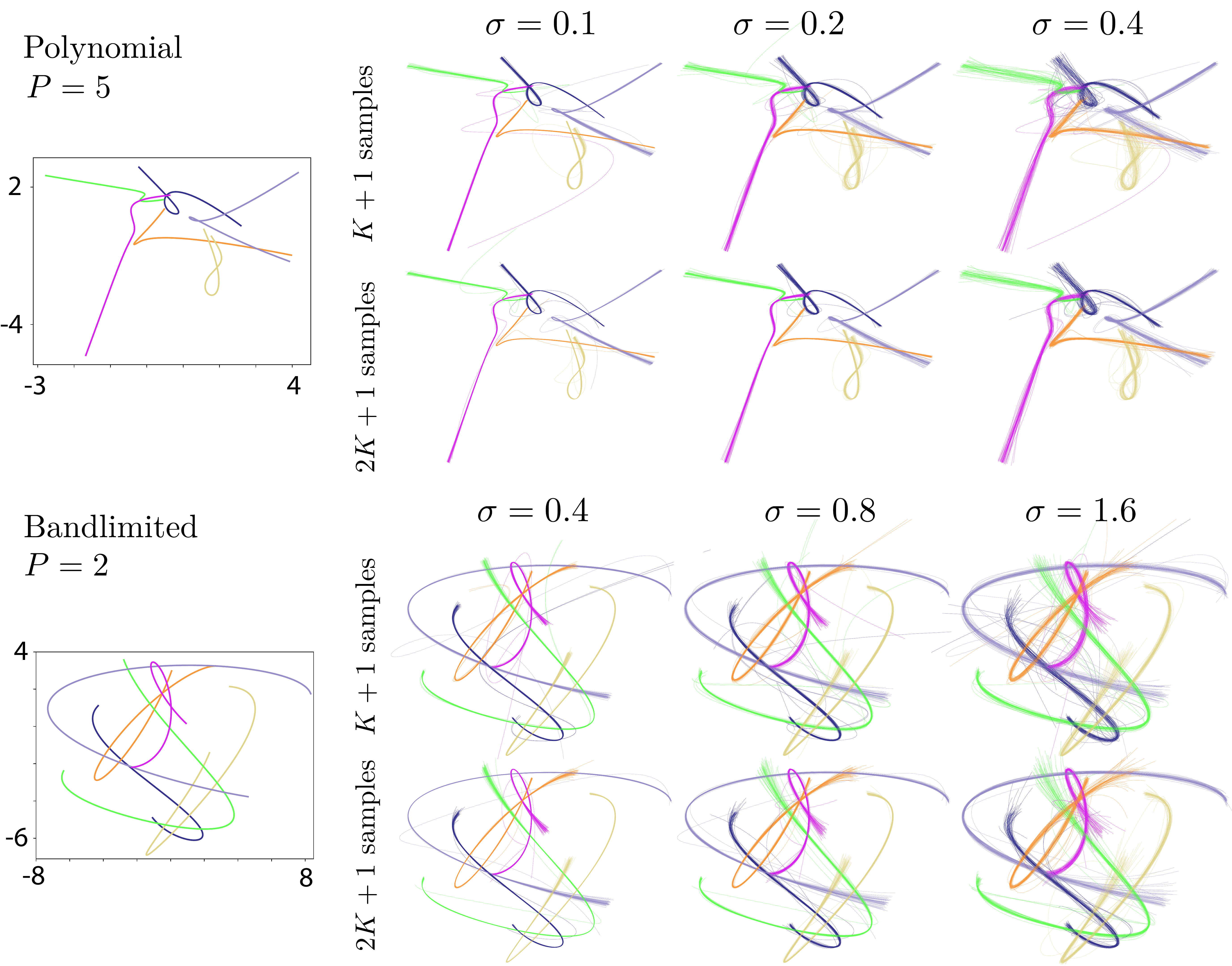} 
\caption{Estimated trajectories, $\wh{\mX}(t)$, for $N=6$ points in $\R^{2}$ at different levels of measurement noise and number of temporal measurements. The time interval of interest is $t \in [-1, 1]$ for polynomial and $t \in [0, 1]$ for bandlimited trajectories.}
\label{fig:sketch}
\end{figure}

\subsection{A Stylized Application: Satellite Positioning} \label{sec:gps}

In this section we apply KEDMs in a stylized satellite positioning scenario where measurements are both very sparse and noisy. We consider a set of satellites moving with constant angular velocity, with angular frequency being an integer multiple of the fundamental frequency $\omega_0$. This is a limitation of the current bandlimited model which we intend to address in future work. Such trajectroeis have the form
\begin{equation*}
\vx(t) = \mR \left( \begin{array}{c} a \cos(\omega t) \\ b \sin(\omega t)  \\ 0\end{array} \right),
\end{equation*}
where $\mR$ is a $3 \times 3$ rotation matrix. 

The set of all satellite trajectories,
\begin{equation*}
\mX (t) = [\vx_1(t,p_1), \cdots, \vx_{N}(t,p_N)]
\end{equation*}
follows the bandlimited trajectory model. Concretely,
\begin{equation*}
    \vx(t,p) = \va_{1} \cos(p \omega_0 t) + \va_{2} \sin(p\omega_0 t)
\end{equation*}
is the trajectory of a satellite whose angular frequency is $p$ times the fundamental frequency $\omega_0$ and $\va_1, \va_2 \in \R^{3}$. The ensemble trajectory, $\mX(t)$, is a bandlimited trajectory of degree $P = \max_{n}{p_n}$. 

We apply our algorithms \Cref{alg:kedmAL} in two experiments. In \Cref{fig:satsim1}, we show trajectories of $N=8$ satellites with the same orbiting frequency $\omega_0$. Since the ellipses are of different sizes, the inner points can also be interpreted as vehicles on the earth. We measure $3$ noisy pairwise distances, out of $28$ available, per sampling time instant. This could model, for instance, occlusions by the earth and other adversarial effects. We compensate for undersampling in space by oversampling in time, taking samples at $T = 30$ different times. Similarly, in \Cref{fig:satsim2} we show $N=5$ satellites with angular frequencies $\omega_0$ and $2\omega_0$, that is, with $P = 2$; we measure only $2$ pairwise distances per sampling time instant (these are extremely sparse measurements with which static localization is  hopeless), at $T=30$ sampling times. As figures show, in both experiments, we successfully reconstruct trajectories of the satellites. 

\begin{figure}[h]
\centering
\includegraphics[width=0.5\textwidth]{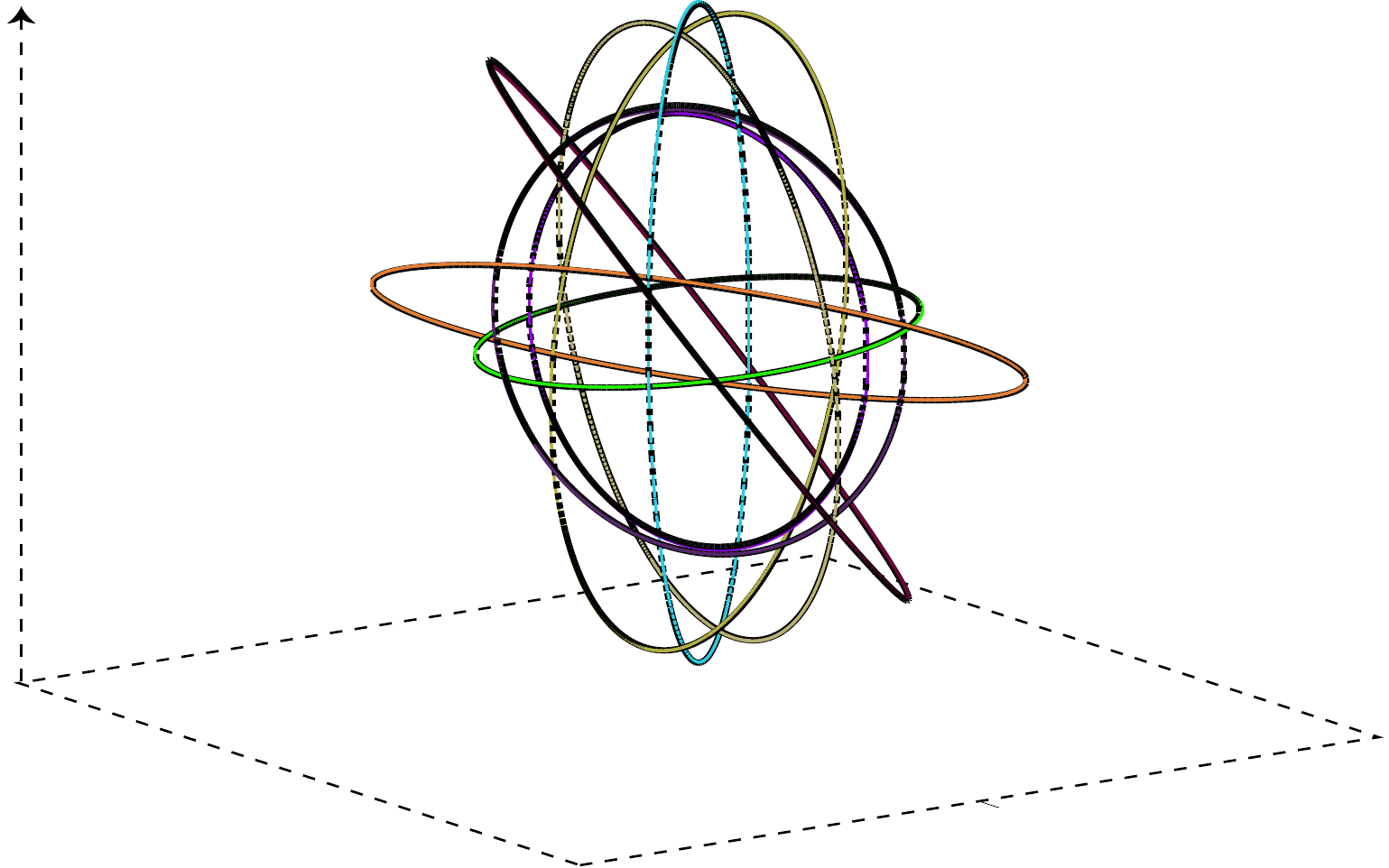}
\caption{Reconstructing the trajectories of $8$ orbiting satellites. Colored and dashed lines represent actual and estimated trajectories. All satellites have the same angular frequency with $P=1$. The measurement matrices are missing about $9/10$ measurements, and noise level is set to $\sigma = 0.05$. The average reconstruction error is $\frac{1}{M}\sum_{i}{e_D(t_i)} = 0.01$.}
\label{fig:satsim1}
\end{figure}

\begin{figure}[h]
\centering
\includegraphics[width=0.5\textwidth]{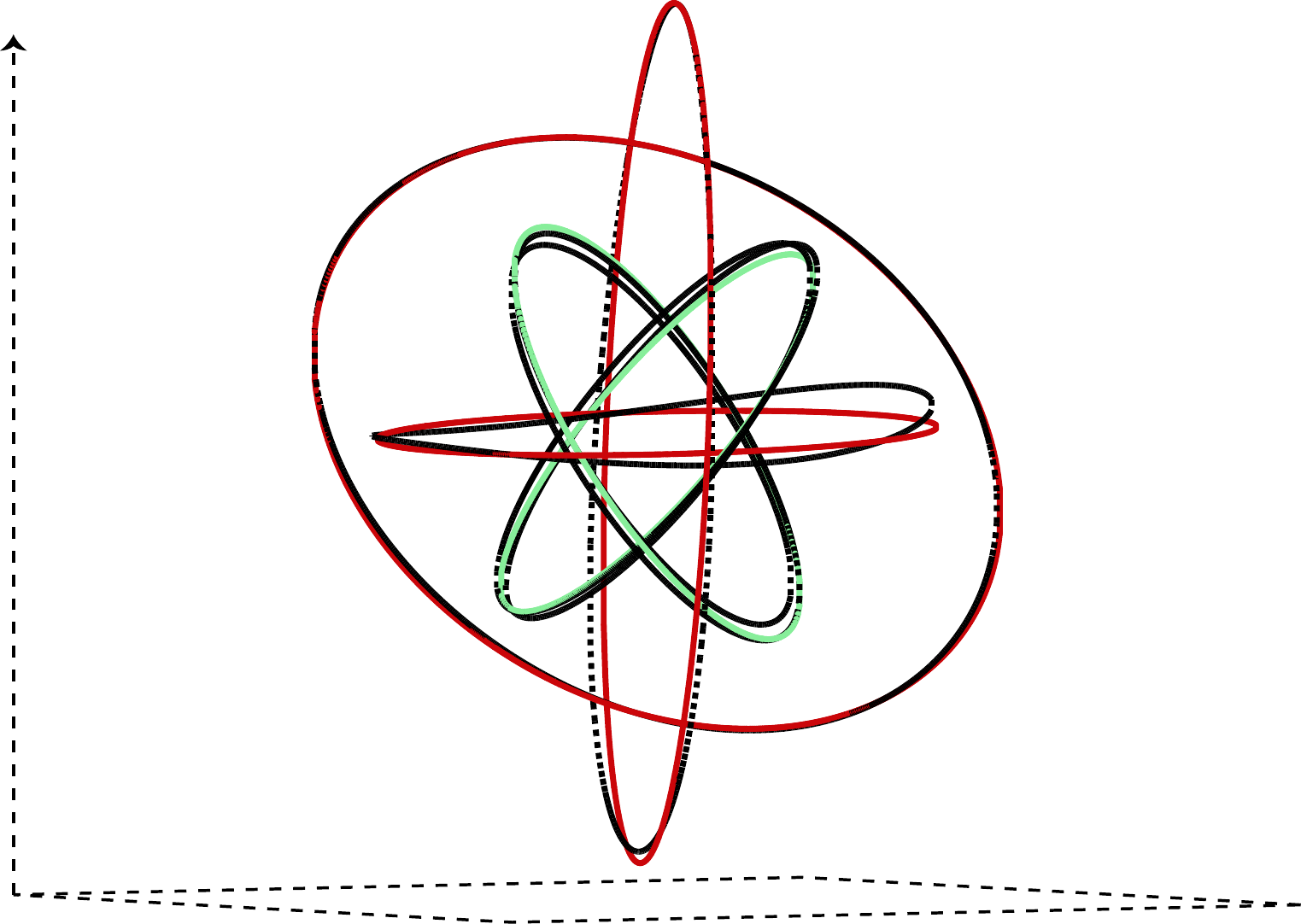}
\caption{Reconstructing the trajectories of $5$ orbiting satellites with angular frequencies of $\omega_0$ and $2\omega_0$. The measurement matrices are $80 \%$ sparse, and average reconstruction error is $\frac{1}{M}\sum_{i}{e_D(t_i)} = 0.03$.}
\label{fig:satsim2}
\end{figure}

\section{Conclusion}

In this paper, we extended the algebraic tools for localization from distances to the case when points are moving. We defined kinetic Euclidean distance matrices for polynomial and bandlimited trajectories, and we derived algorithms based on semidefinite programming to solve the associated trajectory localization problem. The chosen trajectory models are expressive and can approximate continuous trajectories commonly used in localization and tracking. 

The key step in our method is to represent the time-varying Gram matrices as time-varying linear combinations of certain constant matrices. This allowed us to rewrite the localization problem as a semidefinite program. Same as in the static case, the actual localization involves an additional spectral factorization step. However, for polynomial matrices, this is much harder than a simple SVD, and especially from noisy data like those that we get. We circumvent the related difficulties by deriving a spectral factorization method that directly uses anchor measurements.

We demonstrated through numerical experiments that the proposed algorithms can indeed reconstruct model trajectories from sparse and noisy measurements, and that they can explore the tradeoff between the number of distances measured at any given time, and the number of sampling times.

\subsection{Future Work} % (fold)

Both the polynomial and the bandlimited trajectories are special cases of a general class of subspace trajectories. Conceptually, one should be able to derive the localization theory for general subspace trajectories (for example, one could mix bandlimited and polynomial trajectories). Doing this cleanly is not trivial and is part of ongoing work.
Instead of adopting deterministing trajectory models, we could think of stochastic models. The KEDM would then become a random object and ideas of statistical inference could be used to extract point position information. Stochastic models should play particularly well with SLAM.

In the presented numerical experiments we empirically explored the tradeoff between the spatial and the temporal measurements. An interesting and important line of work is to characterize this tradeoff analytically, as a function of the algorithm used for localization.

Finally, an interesting thing happens once we depart from static EDMs: instead of only measuring inter-point distances, we can  measure relative (vector or scalar) velocities, and the same goes for anchors. It is of considerable practical interest to derive tractable optimization procedures that take into account these general kinetic measurements.

\appendices

\section{Spectral Factorization of the Gramian}

Let $q$ stand for $t$ for the polynomial model, or $e^{j\omega t}$ for the bandlimited model. Similarly, let $\mathcal{P} = \set{0, \ldots, P}$ for polynomial or $\mathcal{P} = \set{-P, \ldots, P}$ for bandlimited.

% \footnote{A Laurent polynomial with coefficients in a field $\mathbb{F}$, is expressed as $x(z) = \sum_{p}{c_p z^p}$ where $z$ is a formal variable and can have negative powers. Bandlimited trajectories are a special case of Laurent polynomials where $\mathbb{F}= \C^{d \times N}$ and $z = e^{j \omega}$.}

\begin{lemma}\label{lem:specfactor}
Let $\mG(q) = \sum_{p \in \mathcal{P} + \mathcal{P}} \mB_p q^p$ (with $\mB_p \in \C^{N \times N}$) be rank-$d$ and positive semidefinite. Then there exists a unique (up to a $d \times d$ left unitary factor) $d \times N$ matrix $\mX(q) = \sum_{p \in \mathcal{P}}{\mA_k q^p}$ such that $\mG(q) = \mX(q)^{H}\mX(q)$.
\end{lemma}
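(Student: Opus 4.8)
I would fold the two models into a single statement, obtain a full-size factor from the matrix Fej\'er--Riesz theorem, compress it to $d$ rows while controlling its degree, and close with a maximum-modulus argument for uniqueness. Throughout, let $\Omega=\R$ in the polynomial case and let $\Omega$ be the unit circle in the bandlimited case, so that $\mG(q)\succeq 0$ for all $q\in\Omega$ and $\max_{q\in\Omega}\rank\mG(q)=d$. The real-line case can be transported to the circle by the Cayley substitution $q=j\tfrac{1-z}{1+z}$: multiplying by the nonnegative scalar $|1+z|^{2P}$ converts $\mG$ into a self-adjoint Laurent-polynomial matrix $\widetilde{\mG}(z)$ of half-degree $P$ that is positive semidefinite on $|z|=1$, and a degree-$P$ factor of $\widetilde{\mG}$ pulls back to a degree-$P$ factor of $\mG$. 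So it is enough to prove the statement on the circle.

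For existence I would invoke the (rank-deficient) polynomial/trigonometric matrix spectral factorization theorem \cite{ephremidze2014elementary,tabaghi2019real}: a self-adjoint, positive semidefinite, degree-$2P$ matrix $\mG(q)$ admits a polynomial factor of half-degree $P$. The nontrivial refinement is to obtain this factor of size $d\times N$ rather than $N\times N$. I would do this by a module/minimal-basis argument: the rows of any square analytic factor generate, over the coefficient ring, a submodule of $\C^{1\times N}[q]$ whose fibre at each regular point is exactly the $d$-dimensional row space of $\mG(q)$; a Forney minimal basis of this module furnishes a $d\times N$ polynomial matrix $\mX(q)$ with the same row space at every regular $q$ and with its row degrees pinned down by the degree of $\mG$. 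Since $\mX(q)$ and the square factor have the same row space on a dense subset of $\Omega$, they are related there by a pointwise unitary change of frame that preserves the Gram form, so $\mX(q)^{H}\mX(q)=\mG(q)$ on that dense set and hence identically. Alternatively, one runs the elementary induction of \cite{ephremidze2014elementary} on the matrix size, peeling off one redundant row at a time.

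For uniqueness, suppose $\mX_1(q)^{H}\mX_1(q)=\mX_2(q)^{H}\mX_2(q)=\mG(q)$ with both $\mX_i$ of size $d\times N$ and half-degree $P$. At every $q\in\Omega$ with $\rank\mG(q)=d$, both $\mX_i(q)$ have full row rank and the same row space, so $\mX_1(q)=\mU(q)\mX_2(q)$ with $\mU(q)=\mX_1(q)\mX_2(q)^{H}\big(\mX_2(q)\mX_2(q)^{H}\big)^{-1}$; using $\mG=\mX_2^{H}\mX_2$ one checks $\mU(q)^{H}\mU(q)=\mI$ for $q\in\Omega$. Thus $\mU$ is a rational matrix function, unitary on $\Omega$, whose product with $\mX_2$ (and whose inverse's product with $\mX_1$) is polynomial of degree at most $P$; a standard minimal-degree/maximum-modulus argument then forces $\mU$ to be a constant matrix, necessarily a $d\times d$ unitary.

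The step I expect to be the real obstacle is the degree-controlled rank reduction in the existence part: passing from an $N\times N$ spectral factor to one with exactly $d$ rows while keeping every entry supported on $\mathcal{P}$ (and not some a priori larger index set) is where the genuine content sits; by contrast, the Cayley reduction of the two models and the uniqueness argument are routine once the rank-reduced factor with the correct degree is available.
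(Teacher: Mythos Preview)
The paper does not give a self-contained proof of this lemma: it simply cites \cite{ephremidze2015rank} for the Laurent case and \cite{ephremidze2014elementary} for the full-rank real-line case, and then asserts that ``an entirely parallel construction'' handles the rank-deficient case. Your alternative suggestion---running the Ephremidze induction and peeling off redundant rows---is therefore exactly the route the paper points to, and your uniqueness argument via a rational unitary factor plus maximum modulus is the standard one that underlies those references.

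Your primary existence route, however, has a gap in the rank-reduction step. From a square spectral factor $\mY$ of $\mG$ you pass to a $d\times N$ Forney minimal basis $\mX$ of the row module of $\mY$, and then claim that because $\mY$ and $\mX$ share the same row space at regular points they are related by a ``pointwise unitary change of frame that preserves the Gram form'', whence $\mX^{H}\mX=\mG$. But sharing a row space only gives $\mY(q)=\mC(q)\mX(q)$ for some $N\times d$ matrix $\mC(q)$, and then $\mG=\mX^{H}(\mC^{H}\mC)\mX$; nothing in the minimal-basis construction forces $\mC^{H}\mC=\mI$. Minimality of $\mX$ controls the degree of $\mC$ (it makes $\mC$ polynomial), not its column-orthonormality. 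To rescue this line you still have to spectrally factor the $d\times d$ positive semidefinite polynomial $\mC^{H}\mC$ and absorb the factor into $\mX$ while keeping the support in $\mathcal{P}$---which is precisely the degree-controlled problem you already flagged as the real obstacle, just in reduced size. So the minimal-basis compression does not bypass the hard step; it repackages it. The Cayley reduction to the circle and the uniqueness argument are fine as stated.
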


The statement has been proved for Laurent matrix polynomials in \cite{ephremidze2015rank}. For $q = t$ it is equivalent to spectral factorization of polynomial matrices on the real line. Ephremidze \cite{ephremidze2014elementary} proved the full rank version; an entirely parallel construction to those in \cite{ephremidze2014elementary,ephremidze2015rank} implies that it holds of rank-deficient matrices.

\section*{Acknowledgment}

The authors would like to thank Lasha Ephremidze for help with understanding spectral factorization.

% Can use something like this to put references on a page
% by themselves when using endfloat and the captionsoff option.
\ifCLASSOPTIONcaptionsoff
  \newpage
\fi

\bibliographystyle{IEEEtran}
\bibliography{kedm}

% biography section
% 
% If you have an EPS/PDF photo (graphicx package needed) extra braces are
% needed around the contents of the optional argument to biography to prevent
% the LaTeX parser from getting confused when it sees the complicated
% \includegraphics command within an optional argument. (You could create
% your own custom macro containing the \includegraphics command to make things
% simpler here.)
%\begin{IEEEbiography}[{\includegraphics[width=1in,height=1.25in,clip,keepaspectratio]{mshell}}]{Michael Shell}
% or if you just want to reserve a space for a photo:

% \begin{IEEEbiography}{Michael Shell}
% Biography text here.
% \end{IEEEbiography}

%\vfill

% Can be used to pull up biographies so that the bottom of the last one
% is flush with the other column.
%\enlargethispage{-5in}

% that's all folks
\end{document}